\newcommand{\bra}[1]{{\left\langle{#1}\right\vert}}
\newcommand{\ket}[1]{{\left\vert{#1}\right\rangle}}
\newtheorem{theorem}{Theorem}
\newtheorem{corollary}{Corollary}[theorem]
\begin{document}

\title{Efficient quantum circuits for Szegedy quantum walks}

\author{T. Loke and J.B. Wang\corref{cor1}}
\cortext[cor1]{jingbo.wang@uwa.edu.au}
\address{School of Physics, The University of Western Australia, Perth WA 6009,
Australia}

\begin{abstract}
	
A major advantage in using Szegedy's formalism over discrete-time and continuous-time quantum walks lies in its ability to define a unitary quantum walk on directed and weighted graphs. In this paper, we present a general scheme to construct efficient quantum circuits for Szegedy quantum walks that correspond to classical Markov chains possessing transformational symmetry in the columns of the transition matrix. In particular, the transformational symmetry criteria do not necessarily depend on the sparsity of the transition matrix, so this scheme can be applied to non-sparse Markov chains. Two classes of Markov chains that are amenable to this construction are cyclic permutations and complete bipartite graphs, for which we provide explicit efficient quantum circuit implementations. We also prove that our scheme can be applied to Markov chains formed by a tensor product. We also briefly discuss the implementation of Markov chains based on weighted interdependent networks. In addition, we apply this scheme to construct efficient quantum circuits simulating the Szegedy walks used in the quantum Pagerank algorithm for some classes of non-trivial graphs, providing a necessary tool for experimental demonstration of the quantum Pagerank algorithm.

\end{abstract}

\maketitle

\section{Introduction}
\label{sec:introduction}

In the last two decades, quantum walks have produced a wide variety of quantum algorithms to address different problems, such as the graph isomorphism problem \cite{douglas_classical_2008,gamble_two-particle_2010,berry_two-particle_2011}, ranking nodes in a network \cite{paparo_google_2012,paparo_quantum_2013,loke_comparing_2015}, and quantum simulation \cite{lloyd_universal_1996,berry_black-box_2012,berry_exponential_2014,berry_hamiltonian_2015,berry_simulating_2015}. There are two main types of quantum walks, the discrete-time quantum walk \cite{aharonov_quantum_2001, tregenna_controlling_2003} and the continuous-time quantum walk \cite{farhi_quantum_1998, gerhardt_continuous-time_2003}. In order to achieve the promised processing power of quantum walk based algorithms, it is necessary to formulate quantum circuits that can be efficiently realized on a quantum computer to simulate the corresponding quantum walks. For the discrete-time quantum walk, quantum circuit implementations have been realized on some classes of highly symmetric graphs \cite{douglas_efficient_2009, loke_efficient_2012} and sparse graphs \cite{jordan_efficient_2009}. Analogous work on quantum circuit implementations has also been carried out for continuous-time quantum walks, particularly for sparse graphs \cite{childs_exponential_2003, berry_efficient_2007, childs_limitations_2010}, and special classes of graphs \cite{qiang_efficient_2016}.

A different formalism for the discrete-time quantum walk was introduced by M. Szegedy (termed as the Szegedy quantum walk) by means of quantizing Markov chains \cite{szegedy_spectra_2004, szegedy_quantum_2004}. A major advantage in using Szegedy's formalism is that it allows one to define a unitary quantum walk on directed (or more generally, weighted) graphs \cite{paparo_quantum_2013}, which the standard discrete-time and continuous-time quantum walk formalisms do not allow for directly. It has also been shown that the Szegedy quantum walk provides a quadratic speedup in the quantum hitting time (for any reversible Markov chain with one marked element), compared to the classical random walk \cite{krovi_quantum_2015}. The Szegedy quantum walk has also proven to be useful for a variety of different applications, for example, verifying matrix products \cite{buhrman_quantum_2006}, testing group commutativity \cite{magniez_quantum_2007}, searching for a marked element in a graph \cite{magniez_search_2011}, approximating the effective resistance in electrical networks \cite{wang_quantum_2013}, and quite notably, it is instrumental in the quantum Pagerank algorithm \cite{paparo_google_2012, paparo_quantum_2013} for determining the relative importance of nodes in a graph.

As with the discrete-time and continuous-time quantum walk, the realization of quantum algorithms based on Szegedy walks requires an efficient quantum circuit implementation for the walk itself. Previous work by Chiang \emph{et al.} \cite{chiang_efficient_2010} has showed that given a transition matrix that corresponds to an arbitrary sparse classical random walk, a quantum circuit for the walk operator can be realized. These quantum circuits scale linearly with the sparsity parameter $d$ and polynomially in $\mbox{log}(1/\epsilon)$ (where $\epsilon$ is the approximation accuracy to the entries of the transition matrix). In this paper, we consider some families of transition matrices that possess certain symmetries where an efficient quantum circuit for the Szegedy evolution operator can also be realized, even if they are not sparse.

The paper is organised as follows. In section \ref{subsec:background}, we present the mathematical definition of a Szegedy quantum walk. In section \ref{subsec:circuit}, we establish a theoretical framework for the construction of efficient quantum circuit implementations of Szegedy quantum walks, given a Markov chain that has a transition matrix satisfying some constraints. We then apply this theory to the class of Markov chains where the transition matrix is described by a cyclic permutation in section \ref{sec:cyclic}. We also show that the class of complete bipartite graphs can be efficiently simulated in section \ref{sec:kmn}. Then, we extend the results of section \ref{subsec:circuit} to a tensor product of Markov chains in section \ref{sec:tensor}. We briefly discuss the implementation of weighted interdependent networks in section \ref{sec:win}. Next, in section \ref{sec:google}, we apply the results of section \ref{sec:cyclic} to construct efficient quantum circuits simulating Szegedy walks involved in the quantum Pagerank algorithm for some classes of directed graphs. Lastly, we draw our conclusions in section \ref{sec:conclusion}.

\section{Theory}
\label{sec:theory}

\subsection{Background}
\label{subsec:background}

A classical Markov chain is comprised of a sequence of random variables $ X_t $ for $ t \in \mathbb{Z_+}$ with the property that $P(X_t|X_{t-1},X_{t-2},...,X_1)=P(X_t|X_{t-1})$, i.e. the probability of each random variable is only dependent on the previous one. Suppose that every random variable $X_t$ has $N$ possible states, i.e. $ X_t \in \{s_1,\ldots,s_N\} $. If the Markov chain is time-independent, that is: 

\begin{equation}
P(X_t|X_{t-1})=P(X_{t'}|X_{t'-1}) \mbox{ } \forall t,t' \in \mathbb{Z_+},
\end{equation} 

\noindent then the process can be described by a single (left) stochastic matrix $P$ (called the transition matrix) of dimension $N$-by-$N$, where $P_{i,j} = P(X_t=s_i|X_{t-1}=s_j) $ is the transition probability of $s_j \rightarrow s_i$ with the column-normalization constraint $\displaystyle\sum_{i=1}^{N} P_{i,j} = 1$.

A random walk on a graph can be described by a Markov chain, with states of the Markov chain corresponding to vertices on the graph, and edges of the graph determining the transition matrix. For a graph $G(V,E)$ with vertex set $V=\left\{v_1,\ldots,v_N\right\}$ and edge set $E=\left\{(v_i,v_j),(v_k,v_l),\ldots\right\}$, the $N$-by-$N$ adjacency matrix $A$ is defined as:

\begin{equation}
A_{i,j} = 
\begin{cases}
1 & (v_i,v_j) \in E \\
0 & \mbox{otherwise}.
\end{cases}
\label{eqn:adj}
\end{equation}

From this, we can define the corresponding transition matrix for $G$ as:

\begin{equation}
P_{i,j}=A_{i,j}/\mbox{indeg}(j),
\label{eqn:pgraph}
\end{equation} 

\noindent where $\mbox{indeg}(j)=\displaystyle\sum_{i=1}^n A_{i,j}$ is the in-degree of vertex $j$. An undirected graph is a graph with a symmetric adjacency matrix ($A=A^T$), i.e. if $(v_i,v_j) \in E$ then necessarily $(v_j,v_i) \in E$ also. Directed graphs, on the other hand, do not have a symmetric adjacency matrix ($A\neq A^T$). Weighted graphs are a generalization of equation (\ref{eqn:adj}), allowing for any real value in any entry of the adjacency matrix (equivalent to a transition matrix).

Szegedy's method for quantizing a single Markov chain with transition matrix $P$ starts by considering a Hilbert space $\mathcal{H}=\mathcal{H}_1 \otimes \mathcal{H}_2$ composed of two registers of dimension $N$ each \cite{szegedy_spectra_2004, szegedy_quantum_2004}. A state vector $\ket{\Psi} \in \mathcal{H}$ of dimension $N^2$ can thus be written in the form $ \ket{\Psi} = \displaystyle\sum_{i=0}^{N-1} \sum_{j=0}^{N-1} a_{i,j} \ket{i,j} $. The projector states of the Markov chain are defined as:

\begin{equation}
\ket{\psi_i} = \ket{i} \otimes \displaystyle\sum_{j=0}^{N-1} \sqrt{P_{j+1,i+1}} \ket{j} \equiv \ket{i} \otimes \ket{\phi_i},
\label{eqn:psi}
\end{equation}

\noindent for $i \in \{0,\ldots,N-1\}$. We can interpret $\ket{\phi_i}$ to be the square-root of the $i$th column of the transition matrix $P$. Note that $ \bra{\psi_{i'}} \psi_i \rangle = \delta_{i,i'} $ due to the orthonormality of basis states and the column-normalization constraint on $P$. The projection operator onto the space spanned by $\{\ket{\psi_i}:i \in \{0,\ldots,N-1\}\}$ is then:

\begin{equation}
\Pi = \displaystyle\sum_{i=0}^{N-1} \ket{\psi_i}\bra{\psi_i},
\label{eqn:proj}
\end{equation}

\noindent and the associated reflection operator $\mathcal{R} = 2\Pi - I$. Define also the swap operator that interchanges the two registers as:

\begin{equation}
\mathcal{S} = \displaystyle\sum_{i=0}^{N-1}\sum_{j=0}^{N-1}\ket{i,j}\bra{j,i},
\end{equation}

\noindent which has the property $\mathcal{S}^2 = I$. Then the walk operator for a single step of the Szegedy walk is given by:

\begin{equation}
U_{walk} = \mathcal{S}(2\Pi - I) = \mathcal{S} \mathcal{R}.
\label{eqn:Uwalk}
\end{equation}

We note here that these definitions are essentially a special case of the bipartite walks framework used by Szegedy originally \cite{szegedy_spectra_2004, szegedy_quantum_2004}. The above can be put into the same framework by considering the two-step walk operator $ U_{walk}^2 = (2S \Pi S - I )(2\Pi - I) $, which is equal to Szegedy's bipartite walk operator using equivalent reflections in both spaces. We use the above definitions since, for an undirected graph, one step of $U_{walk}$ is equivalent to one step of a discrete-time quantum walk using the Grover diffusion operator at every vertex as the coin operator \cite{childs_quantum_2008}. However, unlike the conventional discrete-time quantum walk, the Szegedy walk as defined above provides a unitary time-evolution operator even for directed and weighted graphs.

\subsection{Circuit implementation}
\label{subsec:circuit}

For the purposes of this paper, we adopt the usual definition of an efficient quantum circuit implementation. Namely, if the quantum circuit uses at most $O(\mbox{poly}(\mbox{log}(N)))$ one- and two- qubit gates to implement the Szegedy walk operator $U_{walk}$, it is considered as efficient \cite{douglas_efficient_2009,chiang_efficient_2010}.

\begin{figure}[htp]
	\centering
	\includegraphics[scale=0.20]{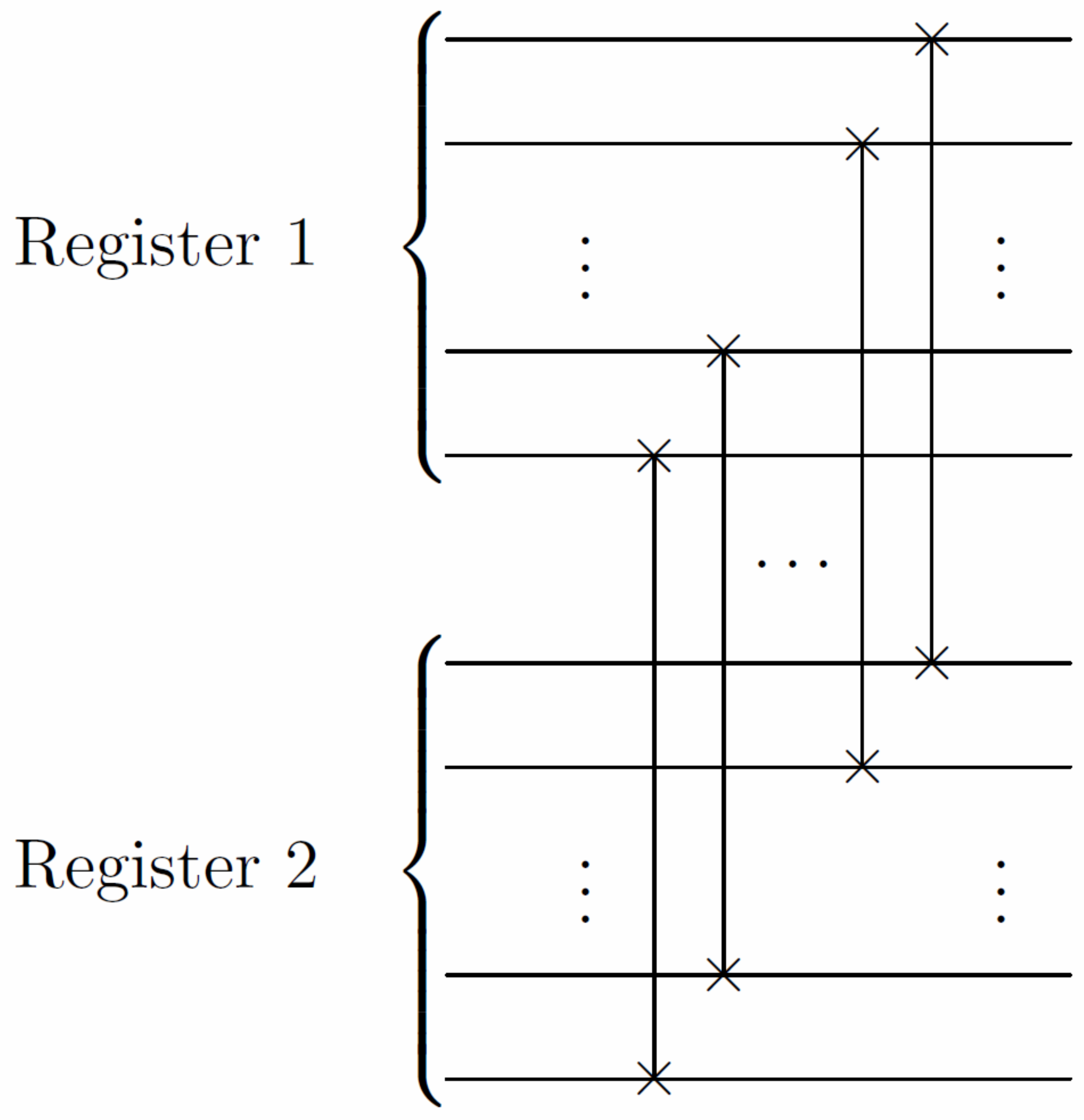}
	\caption{Quantum circuit implementing the swap operator $\mathcal{S}$.}
	\label{fig:Scirc}
\end{figure}

We first note that implementing the swap operator $\mathcal{S}$ in a quantum circuit is straightforward and efficient---it simply consists of $\lceil \mbox{log}_2(N) \rceil$ swap gates applied between the two registers, as shown in Figure \ref{fig:Scirc}. In order to implement the reflection operator $\mathcal{R}$, we diagonalize it using a unitary operation $U$:

\begin{equation}
U \mathcal{R} U^\dag = U (2\Pi - I) U^\dag = 2U\Pi U^\dag - I.
\end{equation}

Let $U$ be of the block-diagonal form $U = \displaystyle\sum_{i=0}^{N-1} \ket{i}\bra{i} \otimes U_i$. Then expanding the above expression using equations (\ref{eqn:psi}) and (\ref{eqn:proj}) gives:

\begin{equation}
U \mathcal{R} U^\dag = 2 \displaystyle\sum_{i=0}^{N-1} \ket{i}\bra{i} \otimes \left( U_i \ket{\phi_i} \bra{\phi_i} U_i^\dag \right) - I.
\end{equation}

Choose $U_i$ such that $U_i \ket{\phi_i} = \ket{b} \mbox{ } \forall i $ for some $\ket{b}$ which is a computational basis state, i.e. $U_i$ acts to transform $\ket{\phi_i}$ into a fixed computational basis state $\ket{b}$, or conversely, the inverse operator $U_i^\dagger$ generates the state $\ket{\phi_i}$ from $\ket{b}$. Then:

\begin{equation}
U \mathcal{R} U^\dag = 2 I_N \otimes \ket{b}\bra{b} - I = I_N \otimes (2\ket{b}\bra{b} - I_N) \equiv D,
\label{eqn:diag}
\end{equation}

\noindent which is a diagonal matrix that can be readily implemented using a controlled-$\pi$ gate (with multiple conditionals) applied to the second register. Hence, the Szegedy walk operator becomes:

\begin{equation}
U_{walk} = \mathcal{S} U^\dag D U.
\label{eqn:Uwalk2}
\end{equation}

It is, of course, infeasible to realize such an operator $U$ efficiently in general---however certain symmetries in the transition matrix $P$ do allow for $U$ to be implemented efficiently. Now, let us write $U_i = K_b^\dag T_i$, such that $K_b^\dag \ket{\phi_r} = \ket{b}$ and $T_i \ket{\phi_i} = \ket{\phi_r}$, where $K_b$ and $T_i$ are both unitary operations, and $ \ket{\phi_r} $ is a chosen reference state. In other words, this means that $K_b$ prepares the state $\ket{\phi_r}$ from a computational basis state $\ket{b}$, and $T_i$ transforms $\ket{\phi_i}$ into $\ket{\phi_r}$---this corresponds to transforming the $i$th column of $P$ into the $r$th column of $P$ (with square roots on both). Note that this scheme works for any choice of the reference state $\ket{\phi_r}$ or the computational basis state $\ket{b}$, i.e. they can be chosen arbitrarily.







\begin{figure}[htp]
	\centering
	\includegraphics[scale=0.30]{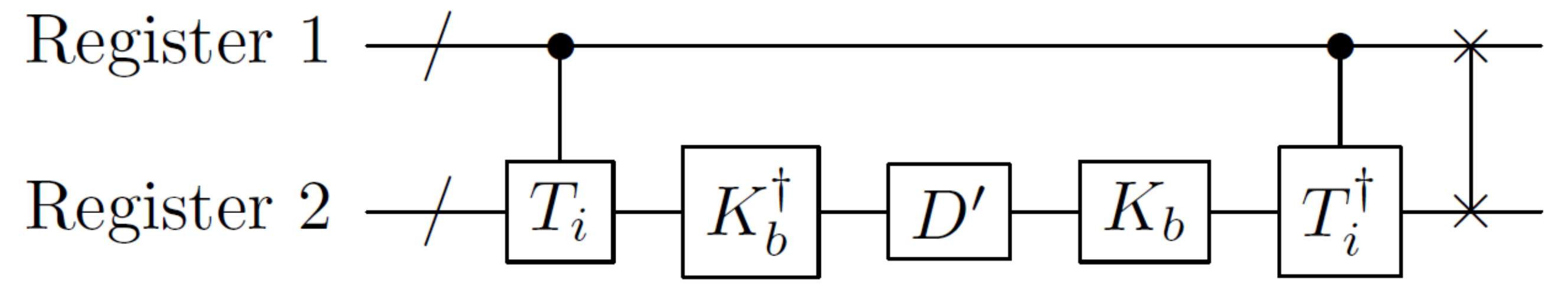}
	\caption{Quantum circuit implementing $U_{walk}$ according to equation (\ref{eqn:Uwalk2}), with $D' = 2 \ket{b}\bra{b} - I_N $. The symbol / denotes a shorthand for registers of qubits representing $N$ states each.}
	\label{fig:Uimpl}
\end{figure}

Figure \ref{fig:Uimpl} shows the general quantum circuit for implementing $U_{walk}$. The criteria for an efficient circuit implementation of $U_{walk}$ is thus:

\begin{theorem}
	The Szegedy walk operator $U_{walk}$ can be efficiently implemented if the following conditions are satisfied:
	\begin{enumerate}
		\item Able to implement all $N$ transformations in the form $\mathcal{T} = \left(\displaystyle\sum_{i=0}^{N-1} \ket{i}\bra{i} \otimes T_i\right)$ (and its inverse) where $T_i:\ket{\phi_i}\rightarrow\ket{\phi_r}$ using at most \textnormal{$O(\mbox{poly}(\mbox{log}(N)))$} gates; and
		\item Able to identify and implement a preparation routine $K_b:\ket{b}\rightarrow\ket{\phi_r}$ (and its inverse) for a chosen computational basis state $\ket{b}$ efficiently.
	\end{enumerate}
	\label{thm:mainres}
\end{theorem}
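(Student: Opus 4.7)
The plan is to verify that the decomposition $U_{walk} = \mathcal{S}\, U^\dagger D\, U$ from equation~(\ref{eqn:Uwalk2}) reduces the implementation of $U_{walk}$ to four subcircuits, and to show that each of them admits an $O(\mathrm{poly}(\log N))$-gate implementation under the two stated conditions. Summing the gate counts then yields the claimed efficiency for $U_{walk}$.

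First, I would dispose of the structure-independent factors. The swap $\mathcal{S}$ was already exhibited in Figure~\ref{fig:Scirc} at cost $\lceil \log_2 N \rceil$ SWAP gates. The diagonal operator $D = I_N \otimes (2\ket{b}\bra{b} - I_N)$ from equation~(\ref{eqn:diag}) is a reflection about a single computational basis state on $\lceil \log_2 N \rceil$ qubits; conjugating with $X$ gates so that $\ket{b}$ becomes $\ket{0\ldots 0}$ reduces it to a multiply-controlled $Z$-type gate, which by the standard decomposition of multiply-controlled single-qubit gates into Toffolis and one- and two-qubit gates uses only $O(\mathrm{poly}(\log N))$ elementary gates.

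Second, I would show that the problem-specific block-diagonal unitary $U = \sum_{i=0}^{N-1} \ket{i}\bra{i} \otimes U_i$, with $U_i = K_b^\dagger T_i$, factors as
\begin{equation}
U = (I_N \otimes K_b^\dagger)\, \mathcal{T},
\end{equation}
so that its inverse is $U^\dagger = \mathcal{T}^\dagger (I_N \otimes K_b)$. The first factor is a single application of $K_b^\dagger$ to the second register, which by condition~2 costs $O(\mathrm{poly}(\log N))$ gates. The second factor is precisely the operator $\mathcal{T}$ of condition~1, and is by assumption also $O(\mathrm{poly}(\log N))$. Correctness of the factorization is immediate: $(I_N \otimes K_b^\dagger)(\ket{i}\bra{i} \otimes T_i) = \ket{i}\bra{i} \otimes K_b^\dagger T_i = \ket{i}\bra{i} \otimes U_i$, and one checks $U_i \ket{\phi_i} = K_b^\dagger T_i \ket{\phi_i} = K_b^\dagger \ket{\phi_r} = \ket{b}$, so $U$ has the diagonalising property used to derive equation~(\ref{eqn:diag}).

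Combining these ingredients in the order prescribed by Figure~\ref{fig:Uimpl}, the total cost of $U_{walk}$ is at most two $U$-type circuits, one diagonal phase, and one swap layer, each $O(\mathrm{poly}(\log N))$; the sum remains $O(\mathrm{poly}(\log N))$. The main obstacle is not in this gate-counting argument itself but in exhibiting, for a given transition matrix $P$, an efficient family of circuits realising $\mathcal{T}$ together with a preparation routine $K_b$ for $\ket{\phi_r}$---which is precisely the problem that the subsequent sections of the paper address for particular families of Markov chains.
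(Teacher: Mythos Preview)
Your proposal is correct and follows essentially the same route as the paper: you use the decomposition $U_{walk}=\mathcal{S}\,U^\dagger D\,U$ from equation~(\ref{eqn:Uwalk2}), observe that $\mathcal{S}$ and $D$ are cheap, and then factor $U=(I_N\otimes K_b^\dagger)\,\mathcal{T}$ so that conditions~1 and~2 handle the two factors. The paper's proof is exactly this argument, stated more tersely; your additional verification that $U_i\ket{\phi_i}=\ket{b}$ and the explicit gate-count bookkeeping are sound elaborations but not new ideas.
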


\begin{proof}
	From equation (\ref{eqn:Uwalk2}), $U_{walk}$ is efficiently implementable iff its components $\mathcal{S}$, $U^\dagger$, $D$ and $U$ are efficiently implementable. As discussed previously, $\mathcal{S}$ and $D$ are easily implementable. Since $U = \displaystyle\sum_{i=0}^{N-1} \ket{i}\bra{i} \otimes U_i$ and $U_i = K_b^\dag T_i$, we can write $U = (I_N \otimes K_b^\dagger) \left(\displaystyle\sum_{i=0}^{N-1} \ket{i}\bra{i} \otimes T_i\right)$. Hence $U$ and $U^\dagger$ (and thus $U_{walk}$) are efficiently implementable if the conditions above are satisfied.
\end{proof}

For condition 2 of Theorem \ref{thm:mainres} in particular, we can make use the state preparation method using integrals \cite{zalka_simulating_1998,mosca_quantum_2001,grover_creating_2002} in order to implement $K_b$ and its inverse. This gives the following result:

\begin{corollary}
	Suppose $\mathcal{T}$ (and its inverse) can be efficiently implemented, and that the probability distribution $\left\{P_{j,r}\right\}$ can be generated by integrating a certain probability density function $p(z)$, i.e. $P_{j,r} = \displaystyle\int_{z_j}^{z_{j+1}} p(z) dz$ where $z_{j+1}-z_j=\Delta z$. If $p(z)$ is (classically) efficiently integrable, then the Szegedy walk operator $U_{walk}$ can be efficiently implemented.
	\label{crl:mainres}
\end{corollary}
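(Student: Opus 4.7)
The plan is to invoke Theorem \ref{thm:mainres} directly: condition 1 holds by hypothesis (we assumed $\mathcal{T}$ and $\mathcal{T}^\dagger$ are efficiently implementable), so the entire task reduces to verifying condition 2, namely that a preparation routine $K_b:\ket{b}\rightarrow\ket{\phi_r}$ can be built efficiently under the integrability hypothesis on $p(z)$.

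To construct $K_b$, I would choose $\ket{b}=\ket{0}$ and build $\ket{\phi_r}=\sum_{j=0}^{N-1}\sqrt{P_{j,r}}\ket{j}$ via the Zalka--Grover--Mosca cascaded rotation scheme. Writing $n=\lceil\log_2 N\rceil$, I would process the $n$ qubits of $\ket{\phi_r}$ one at a time, starting with the most significant qubit. At stage $k$, the already-prepared prefix $\ket{x_1\cdots x_{k-1}}$ designates a subinterval $I_{x_1\cdots x_{k-1}}\subseteq[z_0,z_N]$ of width $2^{n-k+1}\Delta z$, which splits naturally into a left half $I_L$ and right half $I_R$. The correct rotation on the $k$th qubit is $R_y(\theta_{x_1\cdots x_{k-1}})$ with
\begin{equation}
\theta_{x_1\cdots x_{k-1}}=2\arctan\sqrt{\frac{\int_{I_R}p(z)\,dz}{\int_{I_L}p(z)\,dz}},
\end{equation}
so that the resulting amplitudes match $\sqrt{P_{j,r}}=\sqrt{\int_{z_j}^{z_{j+1}}p(z)\,dz}$ by a telescoping argument. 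Applied conditioned on $\ket{x_1\cdots x_{k-1}}$, these rotations build $\ket{\phi_r}$ exactly after $n$ stages.

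For the cost analysis, I would argue that each stage is implementable with polylogarithmically many gates: because $p(z)$ is (classically) efficiently integrable, an arithmetic subroutine can compute the two partial integrals defining $\theta_{x_1\cdots x_{k-1}}$, and hence the rotation angle itself, in $\mathrm{poly}(\log N)$ gates into an ancilla register; a controlled $R_y$ using that ancilla then implements the required conditional rotation, after which the ancilla is uncomputed. Summing over the $n=O(\log N)$ stages yields a total of $O(\mathrm{poly}(\log N))$ gates for $K_b$, and $K_b^\dagger$ follows by inversion. Condition 2 of Theorem \ref{thm:mainres} is therefore satisfied, and the corollary follows.

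The main obstacle I anticipate is simply being precise about the transition from classical efficient integrability to a coherent quantum implementation of the angle oracle, since in principle the angles must be applied with polynomially small error per stage to achieve the overall accuracy. I would handle this by citing the standard analyses in \cite{zalka_simulating_1998,mosca_quantum_2001,grover_creating_2002}, which show that a fixed-precision approximation of $\theta_{x_1\cdots x_{k-1}}$ obtained from classical numerical integration suffices, and that the associated error accumulates only additively in $n$, so polylogarithmic precision in $N$ is enough to keep the total preparation error under control without affecting the gate-count bound.
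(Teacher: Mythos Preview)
Your proposal is correct and follows essentially the same approach as the paper: both reduce to Theorem~\ref{thm:mainres} by noting condition~1 holds by hypothesis and then invoking the Zalka--Mosca--Grover state preparation scheme \cite{zalka_simulating_1998,mosca_quantum_2001,grover_creating_2002} to satisfy condition~2. The only difference is that the paper simply cites \cite{grover_creating_2002} for the preparation of $\ket{\phi_r}$ and notes $K_b^\dagger$ is obtained by running $K_b$ in reverse, whereas you spell out the cascaded-rotation construction and its cost analysis in detail.
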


\begin{proof}
	This follows directly from \cite{grover_creating_2002}, which provides a method to prepare $\ket{\phi_r} = \displaystyle\sum_{i=0}^{N-1} \sqrt{P_{j,r}} \ket{j}$ under these conditions. The inverse procedure $K_b^\dagger$ can be obtained by uncomputation of $K_b$, i.e. applying the circuit for $K_b$ in reverse order. Hence, the conditions in Theorem \ref{thm:mainres} are met.
\end{proof}

However, we note that $K_b$ doesn't have to be constructed using this method---any $K_b$ that satisfies $K_b^\dag \ket{\phi_r} = \ket{b}$ is sufficient. Some examples of this will be provided in the following sections.

This formalism can be generalized slightly by allowing for more than one reference state. Consider a partition $ Z = \{ Z_1,\ldots,Z_M \} $ of the set $ X = \{ 0,\ldots,N-1 \} $ such that $ \emptyset \notin Z $, $ \displaystyle\bigcup_{x=1}^{M} Z_x = X $ and $ Z_x \cap Z_y = \emptyset \mbox{ } \forall x \neq y $. Generalizing the previous equations gives:

\begin{eqnarray}
\displaystyle
U_{walk} &=& \mathcal{S} \prod_{x=1}^{M} U_x^\dag D_x U_x \label{eqn:gen1} \\
U_x &=& \sum_{y\in Z_x} \ket{y}\bra{y} \otimes U_{x,y} + \left( \sum_{y' \notin Z_x} \ket{y'}\bra{y'} \right) \otimes I_N \label{eqn:gen2} \\
D_x &=& \left( 2 \sum_{y\in Z_x} \ket{y}\bra{y} \right) \otimes \ket{b_x}\bra{b_x} - I \label{eqn:gen3} \\
U_{x,y} &=& K_{b_x}^\dag T_{x,y}, \label{eqn:gen4}
\end{eqnarray}

\noindent with $ K_{b_x}^\dag \ket{\phi_{r_x}} = \ket{b_x} $ and $ T_{x,y} \ket{\phi_y} = \ket{\phi_{r_x}} $, where $ \ket{\phi_{r_x}} $ is the reference state for the subset $Z_x$. The corresponding quantum circuit is similar to that given in Figure \ref{fig:Uimpl}, except with $M$ independent segments with conditionals determined by the partition $Z$. The criteria for efficient circuit implementation of $U_{walk}$ with the partition $ Z = \{ Z_1,...,Z_M \} $ is thus:

\begin{theorem}
	The Szegedy walk operator $U_{walk}$ can be efficiently implemented with the partition $ Z = \{ Z_1,...,Z_M \} $ if the following conditions are satisfied:
	\begin{enumerate}
		\item The number of subsets must be bounded above as \textnormal{$M \leq O(\mbox{poly}(\mbox{log}(N)))$}.
		\item For every subset $Z_x$,
		\begin{enumerate}
			\item Able to implement controlled unitary operations according to $Z_x$ efficiently, i.e. $ \left(\sum_{y\in Z_x} \ket{y}\bra{y} \right) \otimes U + \left( \sum_{y' \notin Z_x} \ket{y'}\bra{y'} \right) \otimes I_N $ can be efficiently implemented assuming $U$ can be done efficiently;
			\item Able to implement all $|Z_x|$ transformations in the form $ \mathcal{T}_x = \sum_{y\in Z_x} \ket{y}\bra{y} \otimes T_{x,y} + \left( \sum_{y' \notin Z_x} \ket{y'}\bra{y'} \right) \otimes I_N $ where $ T_{x,y}: \ket{\phi_y} \rightarrow \ket{\phi_{r_x}} $ using at most \textnormal{$O(\mbox{poly}(\mbox{log}(N)))$} gates; and
			\item Able to identify and implement a preparation routine $K_{b_x}:\ket{b_x}\rightarrow\ket{\phi_{r_x}}$ (and its inverse) for a chosen computational basis state $\ket{b_x}$ efficiently.
		\end{enumerate}
	\end{enumerate}
	\label{thm:genres}
\end{theorem}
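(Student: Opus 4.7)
The plan is to mirror the proof of Theorem~\ref{thm:mainres} segment by segment, exploiting the factorization in equation~(\ref{eqn:gen1}). From the identity $U_{walk} = \mathcal{S}\prod_{x=1}^{M} U_x^\dagger D_x U_x$, it suffices to show that $\mathcal{S}$ together with each factor $U_x^\dagger D_x U_x$ admits a polylog circuit, and that the total gate count, which scales linearly in $M$, remains polylog---this last point is secured by condition~1.

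For each fixed $x$, I would first decompose $U_x$ using equations~(\ref{eqn:gen2}) and~(\ref{eqn:gen4}) as
\begin{equation*}
U_x = \left[\sum_{y\in Z_x}\ket{y}\bra{y}\otimes K_{b_x}^\dagger + \sum_{y'\notin Z_x}\ket{y'}\bra{y'}\otimes I_N\right]\mathcal{T}_x .
\end{equation*}
The right factor $\mathcal{T}_x$ is efficient by condition~2(b); the left factor is a $Z_x$-controlled application of $K_{b_x}^\dagger$, which is efficient by combining condition~2(a) (providing the control structure on $Z_x$) with condition~2(c) (providing the underlying preparation routine $K_{b_x}$ and its inverse). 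Similarly, the diagonal factor
\begin{equation*}
D_x = 2\!\left(\sum_{y\in Z_x}\ket{y}\bra{y}\right)\otimes\ket{b_x}\bra{b_x} - I
\end{equation*}
equals, up to an overall sign that accumulates to at most a global phase across the $M$ segments, a $Z_x$-controlled application of the single-register reflection $2\ket{b_x}\bra{b_x}-I_N$. Since $\ket{b_x}$ is a computational basis state, that reflection is just a multi-controlled $\pi$ gate on the second register, so a second invocation of condition~2(a) yields an efficient implementation of $D_x$. Taking daggers, $U_x^\dagger$ is handled by the same argument in reverse.

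Having verified that each segment $U_x^\dagger D_x U_x$ requires only $O(\mbox{poly}(\mbox{log}(N)))$ gates, the product over $x=1,\ldots,M$ uses $O(M\cdot\mbox{poly}(\mbox{log}(N)))$ gates; condition~1 then forces this to remain $O(\mbox{poly}(\mbox{log}(N)))$. Adding the $O(\mbox{log}(N))$ swap gates for $\mathcal{S}$ yields an efficient circuit for $U_{walk}$, completing the argument.

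The main subtlety, I expect, lies in condition~2(a): it is the bridge that promotes the uncontrolled primitives $K_{b_x}$, $K_{b_x}^\dagger$, and the reflection $2\ket{b_x}\bra{b_x}-I_N$ to their $Z_x$-controlled counterparts without paying more than a polylog overhead. In general, implementing an operation conditioned on the first register lying in an arbitrary subset of size up to $N$ is not obviously efficient, so condition~2(a) hides the real combinatorial cost of the partition $Z$, and verifying it will be the principal task whenever the theorem is applied to a concrete family of Markov chains.
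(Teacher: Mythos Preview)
Your proposal is correct and follows essentially the same route as the paper: invoke the product decomposition~(\ref{eqn:gen1}), factor each $U_x$ as a $Z_x$-controlled $K_{b_x}^\dagger$ composed with $\mathcal{T}_x$ via~(\ref{eqn:gen2}) and~(\ref{eqn:gen4}), handle $D_x$ through condition~2(a) applied to the one-register reflection $2\ket{b_x}\bra{b_x}-I_N$, and use condition~1 to control the aggregate cost over the $M$ segments. If anything, you are slightly more careful than the paper on one point: the paper asserts that condition~2(a) with $U=2\ket{b_x}\bra{b_x}-I_N$ yields $D_x$ directly, whereas in fact it yields an operator that agrees with $D_x$ on $Z_x$ but is $+I_N$ rather than $-I_N$ on the complement; your remark that the discrepancy is a sign that accumulates to a global phase $(-1)^{M-1}$ across the product is the correct resolution, and the paper leaves this implicit.
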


\begin{proof}
	From equation (\ref{eqn:gen1}), $U_{walk}$ is efficiently implementable iff each of its components $\mathcal{S}$, $U_x^\dagger$, $D_x$ and $U_x$ are efficiently implementable and the number of subsets is bounded above as \textnormal{$M \leq O(\mbox{poly}(\mbox{log}(N)))$}. As discussed previously, $\mathcal{S}$ is easily implementable. If condition 2(a) is satisfied, then it follows that $D_x$ can be efficiently implemented, since $U = 2\ket{b_x}\bra{b_x}-I_N$ can be implemented using a single controlled-$\pi$ gate. From equations (\ref{eqn:gen2}) and (\ref{eqn:gen4}), we can write:
	\begin{eqnarray*}
		\displaystyle
		U_x &=& \left( \left(\sum_{y\in Z_x} \ket{y}\bra{y} \right) \otimes K_{b_x}^\dagger + \left( \sum_{y' \notin Z_x} \ket{y'}\bra{y'} \right) \otimes I_N \right) \\
		&& \qquad \quad \left( \sum_{y\in Z_x} \ket{y}\bra{y} \otimes T_{x,y} + \left( \sum_{y' \notin Z_x} \ket{y'}\bra{y'} \right) \otimes I_N \right).
	\end{eqnarray*}
	If conditions 2(a) and 2(c) are satisfied, then the first term can be implemented by identifying $U = K_{b_x}^\dagger$ in condition 2(a). Hence $U_x$ and $U_x^\dagger$ (and thus $U_{walk}$) are efficiently implementable if the conditions above are satisfied.
\end{proof}

As before, we can use the state preparation method using integrals in order to implement $K_{b_x}$ and its inverse, although any other construction of $K_{b_x}$ that satisfies $K_{b_x}^\dag \ket{\phi_{r_x}} = \ket{b_x}$ is sufficient. 





\section{Cyclic permutations}
\label{sec:cyclic}

A cyclic permutation is where elements of a set are shifted by a fixed offset. Formally, let $R$ be the one-element right-rotation operator such that $ R \left( \{ c_1, c_2, ..., c_{N-1}, c_N \}^T \right) = \{ c_N, c_1, ..., c_{N-2}, c_{N-1} \}^T $, and $L = R^\dag$ is the one-element left-rotation operator. In this section, we consider classes of transition matrix $P$ where each column is a fixed cyclic permutation of the previous column, i.e. we can write $ \ket{\phi_{i+1}} = R^x \ket{\phi_i} $ or $ \ket{\phi_{i+1}} = L^x \ket{\phi_i} $ for some $ x \in \mathbb{Z}_+ $ for all values of $i$.

One important class of cyclic permutations is circulant matrices. In a circulant matrix, each column is right-rotated by one element with respect to the previous column, i.e. $ \ket{\phi_{i+1}} = R \ket{\phi_i} $. Hence, we can construct the projector states as:

\begin{equation}
\ket{\psi_i} = \ket{i} \otimes R^{i} \ket{\phi_0}.
\label{eqn:psicirc}
\end{equation}

Hence, if we select a single reference state as $\ket{\phi_r} = \ket{\phi_0}$, we can define the transformations as $T_i = (R^{\dag})^{i} = L^{i} $. This set of transformations can always be efficiently implemented, since the operations $\{L, L^2, L^4, L^8, ...\}$ can be realized by decrementing the corresponding bit values by 1---this takes $ O(\mbox{log}(N)^2) $ generalized CNOT gates. The inverse transformations $T_i^\dag = R^{i}$ can be implemented in a similar fashion by incrementing bit values by 1 where needed. The preparation routine $K_b$ for preparing $ \ket{\phi_0} $ from some computational basis state $\ket{b}$ can either be constructed by the state preparation method using integrals (see Corollary \ref{crl:mainres}) or by finding some other unitary transformation that does $\ket{b}\rightarrow\ket{\phi_r}$. Assuming one such $K_b$ can be efficiently implemented, then by Theorem \ref{thm:mainres} the Szegedy walk operator $U_{walk}$ for the circulant transition matrix can be efficiently implemented.

\begin{figure}[htp]
	\centering
	\includegraphics[scale=0.20]{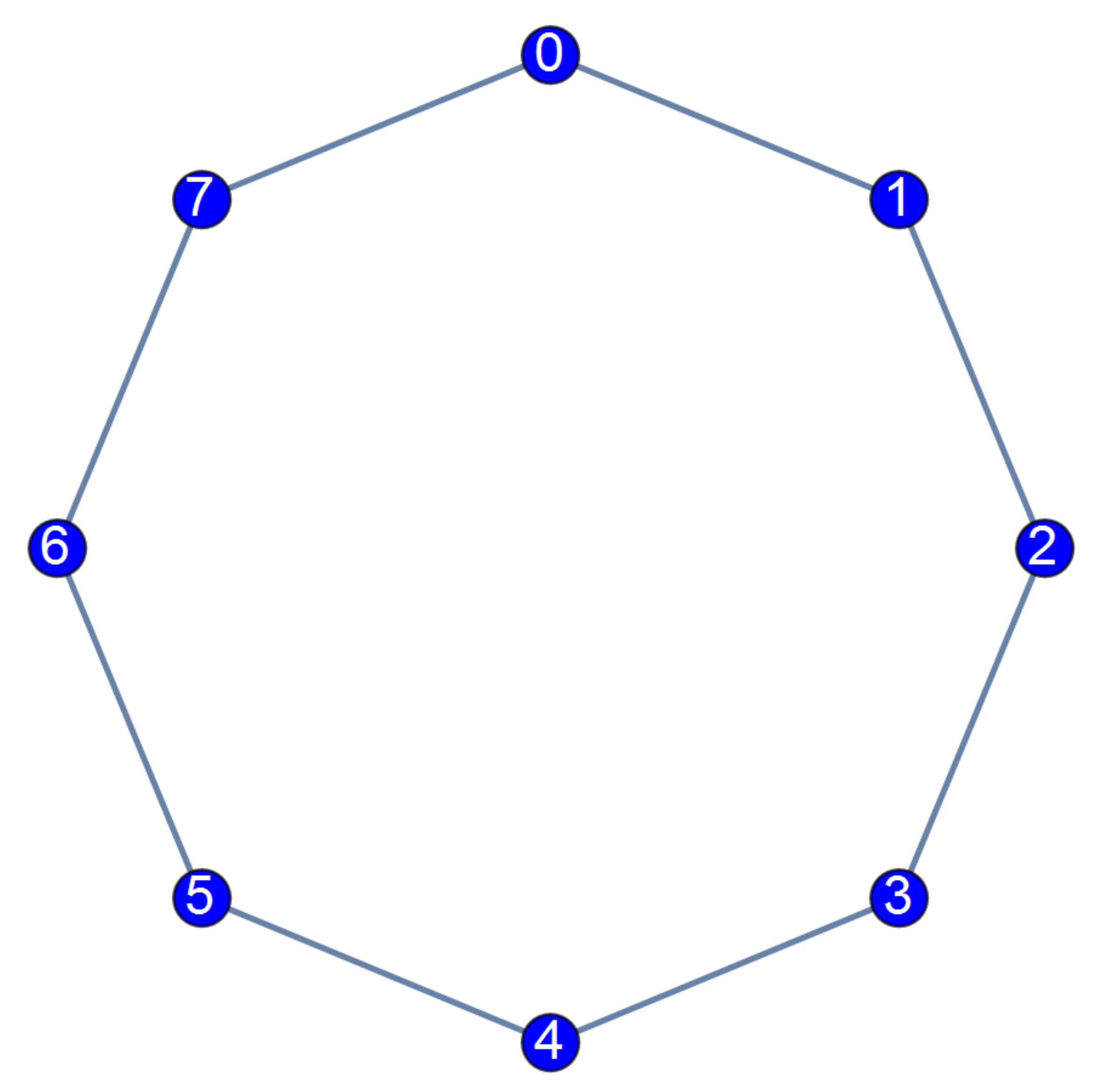}
	\caption{Cycle graph $C_N$ with parameter $N=8$.}
	\label{fig:C8}
\end{figure}

In the case of an undirected graph, that is, $A = A^T$, we can construct the Szegedy walk using the above method. For example, the cycle graph $C_N$ for some $N = 2^n$ (example shown in Figure \ref{fig:C8}) where $n \in \mathbb{Z}_+$ can be constructed by identifying $ \ket{\phi_0} = \{ 0, 1/\sqrt{2}, 0, ..., 0, 1/\sqrt{2} \}^T $, which can be prepared using a Hadamard gate and a sequence of controlled-NOT gates. Figure \ref{fig:CNimpl} shows the quantum circuit for $C_{N}$ with $\ket{b} = \ket{N/2-1}$.

\begin{figure}[htp]
	\centering
	\includegraphics[scale=0.3]{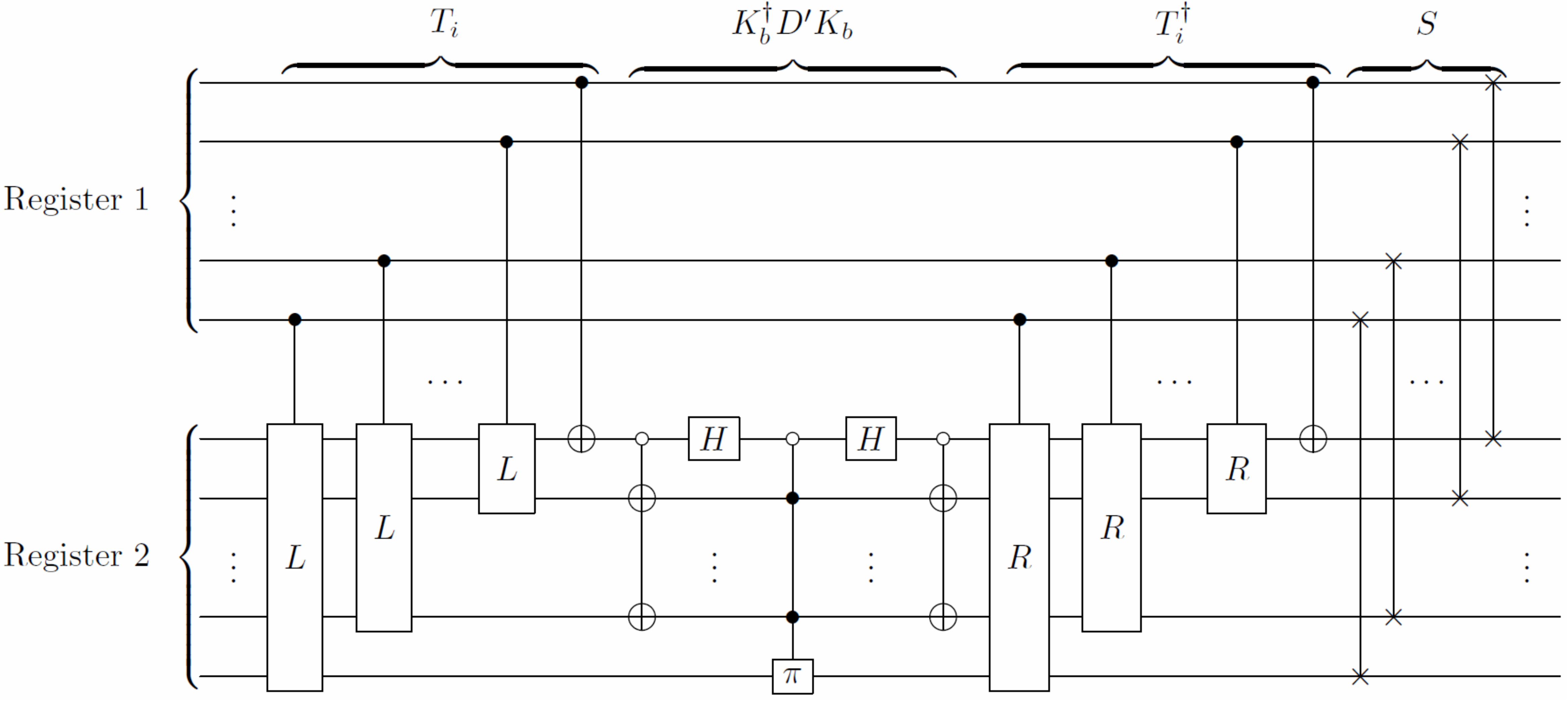}
	\caption{Quantum circuit implementing $U_{walk}$ for the $C_{N}$ graph.}
	\label{fig:CNimpl}
\end{figure}

\begin{figure}[htp]
	\centering
	\includegraphics[scale=0.20]{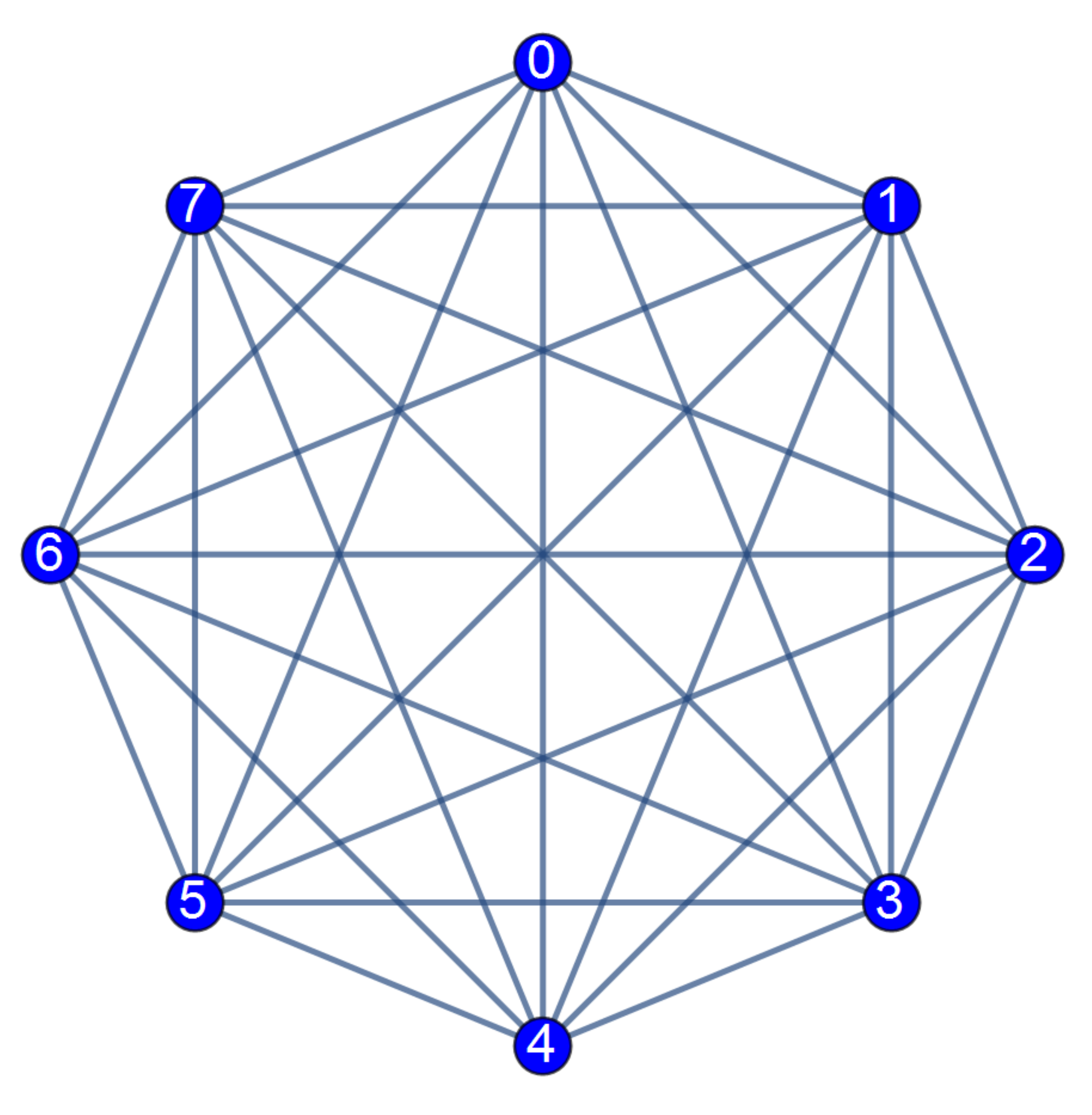}
	\caption{Complete graph $K_N$ with parameter $N=8$.}
	\label{fig:K8}
\end{figure}

More dense graphs, such as the complete graph $K_N$ for some $N = 2^n$ (example shown in Figure \ref{fig:K8})  where $n \in \mathbb{Z}_+$, which is circulant with $ \ket{\phi_0} = \{ 0, \allowbreak \sqrt{1/(N-1)}, \allowbreak \ldots, \allowbreak \sqrt{1/(N-1)} \}^T $, can be implemented in similar fashion using rotation gates for the preparation routine $K_b$, as shown in Figure \ref{fig:KnKb} for $\ket{b}=\ket{1}$, where the rotation gate $R_y$ is defined as:

\begin{equation}
R_y(\theta) = \left(
\begin{array}{c c}
\mbox{cos}(\theta) & -\mbox{sin}(\theta) \\
\mbox{sin}(\theta) & \mbox{cos}(\theta)
\end{array}
\right).
\end{equation}

Using this preparation routine, the quantum circuit implementing $U_{walk}$ for $K_N$ is shown in Figure \ref{fig:Knimpl}.

\begin{figure}[htp]
	\centering
	\includegraphics[scale=0.30]{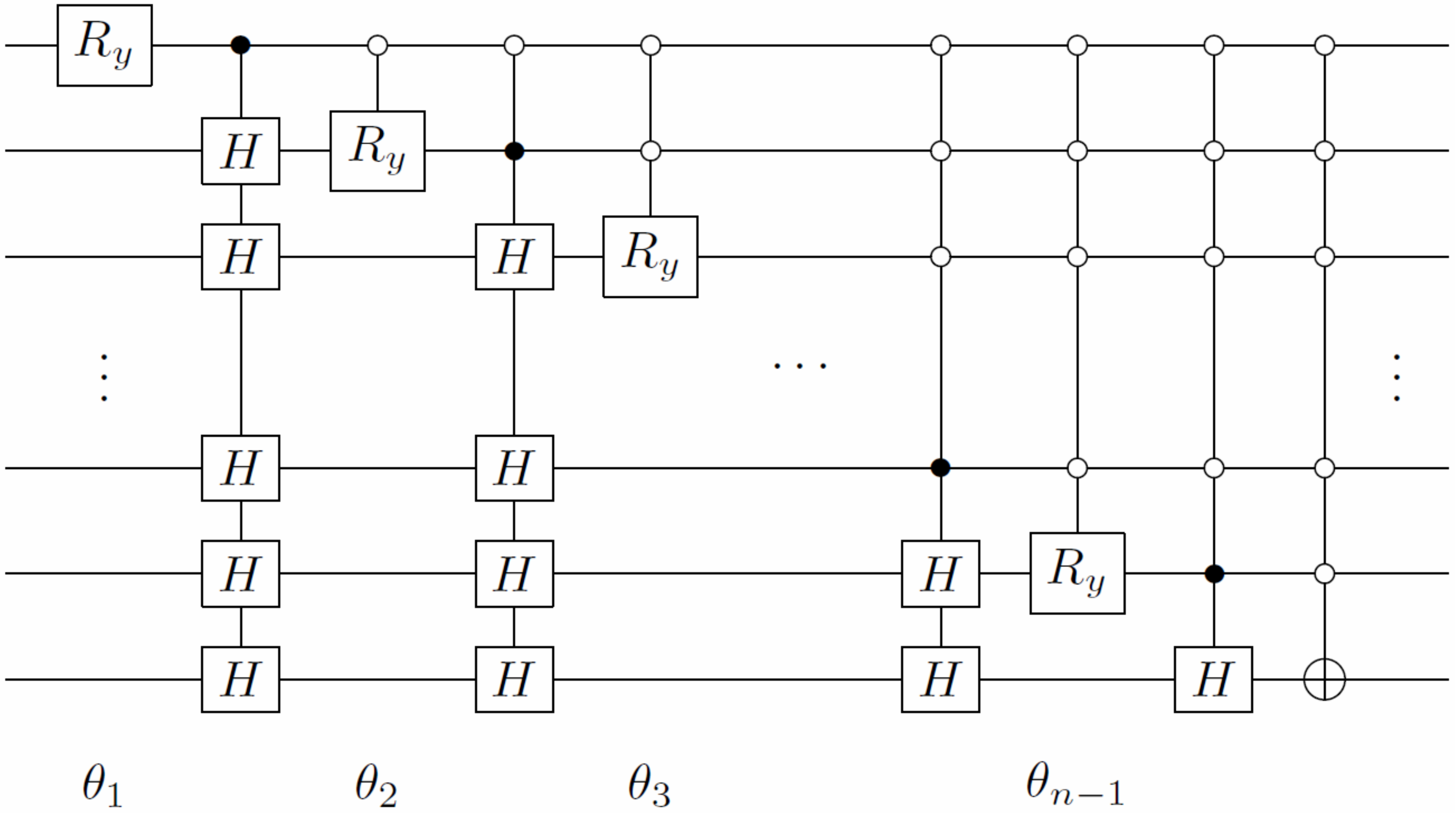}
	\caption{Quantum circuit implementing $K_{b}:\ket{b} \rightarrow \ket{\phi_0}$ for the complete graph $K_{2^n}$. The rotation angles are given by $\mbox{cos}(\theta_i) = \frac{2^{n-i}-1}{2^{n-i+1}-1}$.}
	\label{fig:KnKb}
\end{figure}

\begin{figure}[htp]
	\centering
	\includegraphics[scale=0.32]{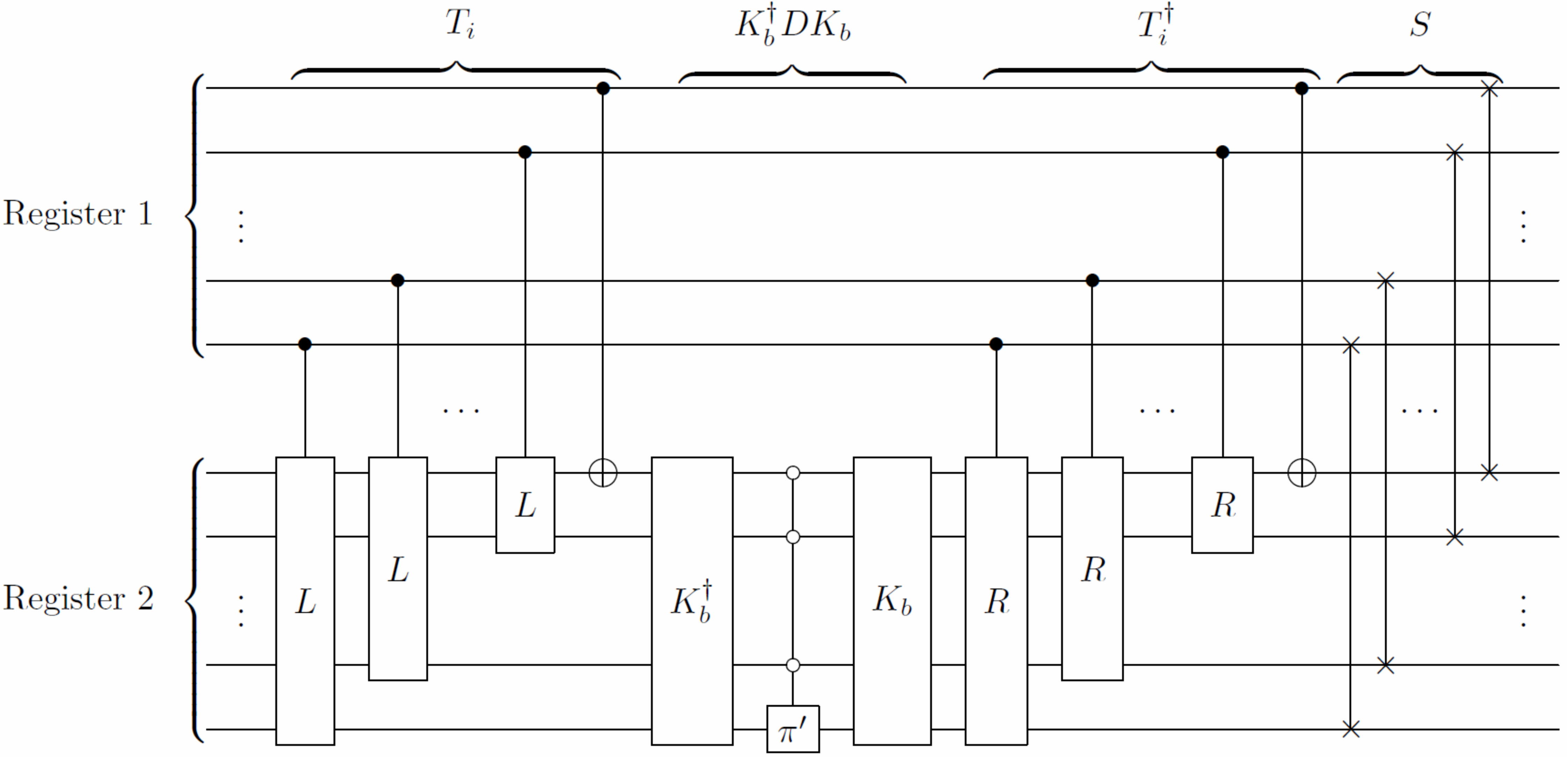}
	\caption{Quantum circuit implementing $U_{walk}$ for the $K_N$ graph, where the quantum circuit for the preparation routine $K_b$ is given in Figure \ref{fig:KnKb}. The $ \pi' $ gate is equivalent to $ \sigma_x \pi \sigma_x $, i.e. a $-1$ phase applied to the first state of the qubit.}
	\label{fig:Knimpl}
\end{figure}

Note, however, that we are not constrained to work with transition matrices $P$ which correspond to a Markov chain on an undirected graph---any kind of circulant matrix where $ \ket{\phi_r} $ can be efficiently prepared can be implemented in this fashion.

We can generalize this construction to any kind of $P$ where the columns are formed by any cyclic permutation, by changing equation (\ref{eqn:psicirc}) into:

\begin{equation}
\ket{\psi_i} = \ket{i} \otimes R^{ix} \ket{\phi_0},
\end{equation}

\noindent for some $x \in \mathbb{Z}_+$ (or similarly for left-rotations), and then changing the transformations $T_i$ accordingly to $T_i = (R^\dagger)^{ix} = L^{ix}$. Since $\{L, L^2, L^4, L^8, ...\}$ can be realized efficiently, it follows that any $L^x$ can be realized efficiently. Hence, every transformation of the form $T_i = (L^x)^{i} = L^{ix}$ can also be realized efficiently.

\section{Complete bipartite graphs}
\label{sec:kmn}

\begin{figure}[htp]
	\centering
	\includegraphics[scale=0.25]{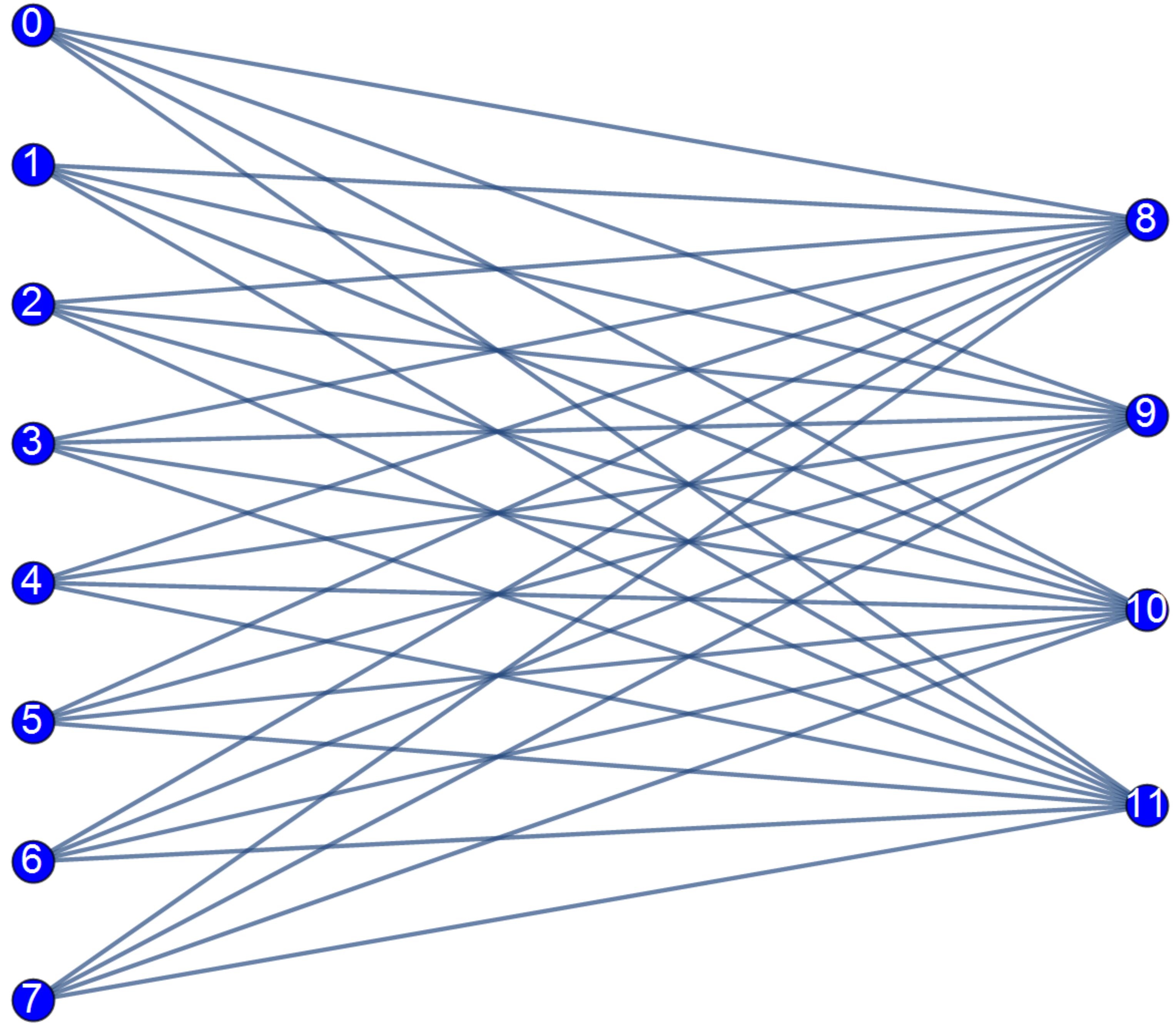}
	\caption{Complete bipartite graph $K_{N_1,N_2}$ with parameters $N_1=8$ and $N_2=4$.}
	\label{fig:kmn}
\end{figure}

One particular class of dense graphs that can be efficiently implemented using the construction in section \ref{subsec:circuit} is the class of complete bipartite graphs. A complete bipartite graph $K_{N_1,N_2}$ (example shown in Figure \ref{fig:kmn}) is a bipartite graph (i.e. the set of graph vertices is decomposable into two disjoint sets such that no two graph vertices within the same set are adjacent) such that every pair of graph vertices in the two sets are adjacent, with $N_1$ and $N_2$ vertices in the sets respectively. The transition matrix (i.e. the column-normalized adjacency matrix) of dimension $(N_1+N_2)$-by-$(N_1+N_2)$ for the graph $K_{N_1,N_2}$ is:

\begin{equation}
P(K_{N_1,N_2}) = \left(
\begin{array}{c c}
0 & \frac{1}{N_1}J_{N_1,N_2} \\
\frac{1}{N_2}J_{N_2,N_1} & 0
\end{array}
\right),
\label{eqn:pk}
\end{equation}

\noindent where $J_{a,b}$ denotes the all-1's matrix of dimension $a$-by-$b$. The graph suggests the natural partition $Z = Z_1 \cup Z_2$ where $Z_1 = \{0,\ldots,N_1-1\}$ and $Z_2 = \{N_1,\ldots,N_1+N_2-1\}$. The transformations $T_{x,y}$ for either set are all identical, i.e. $T_{x,y} = I_{N_1+N_2}$, since within each subset every column of the transition matrix is equal. From equation (\ref{eqn:pk}), we can identify $ \ket{\phi_0} = \{0,\ldots,0,1/\sqrt{N_2},\ldots,1/\sqrt{N_2}\}^T $ and $ \ket{\phi_{N_1}} = \{1/\sqrt{N_1},\ldots,1/\sqrt{N_1},0,\ldots,0\}^T $ as the reference states $\ket{\phi_{r_1}}$ and $\ket{\phi_{r_2}}$ respectively. Both reference states can be generated by integrating step functions, which are efficiently integrable. Lastly, condition 2(a) in Theorem \ref{thm:genres} is also satisfied by the definition of $Z_1$ and $Z_2$, since at most $ O(\mbox{log}(N_1+N_2)) $ repetitions of $U$ (with different sets of conditionals) are required to implement the controlled unitary operation in either case. Hence, by Theorem \ref{thm:genres}, $U_{walk}$ for a complete bipartite graph can be efficiently implemented.

\begin{figure}[htp]
	\centering
	\includegraphics[scale=0.35]{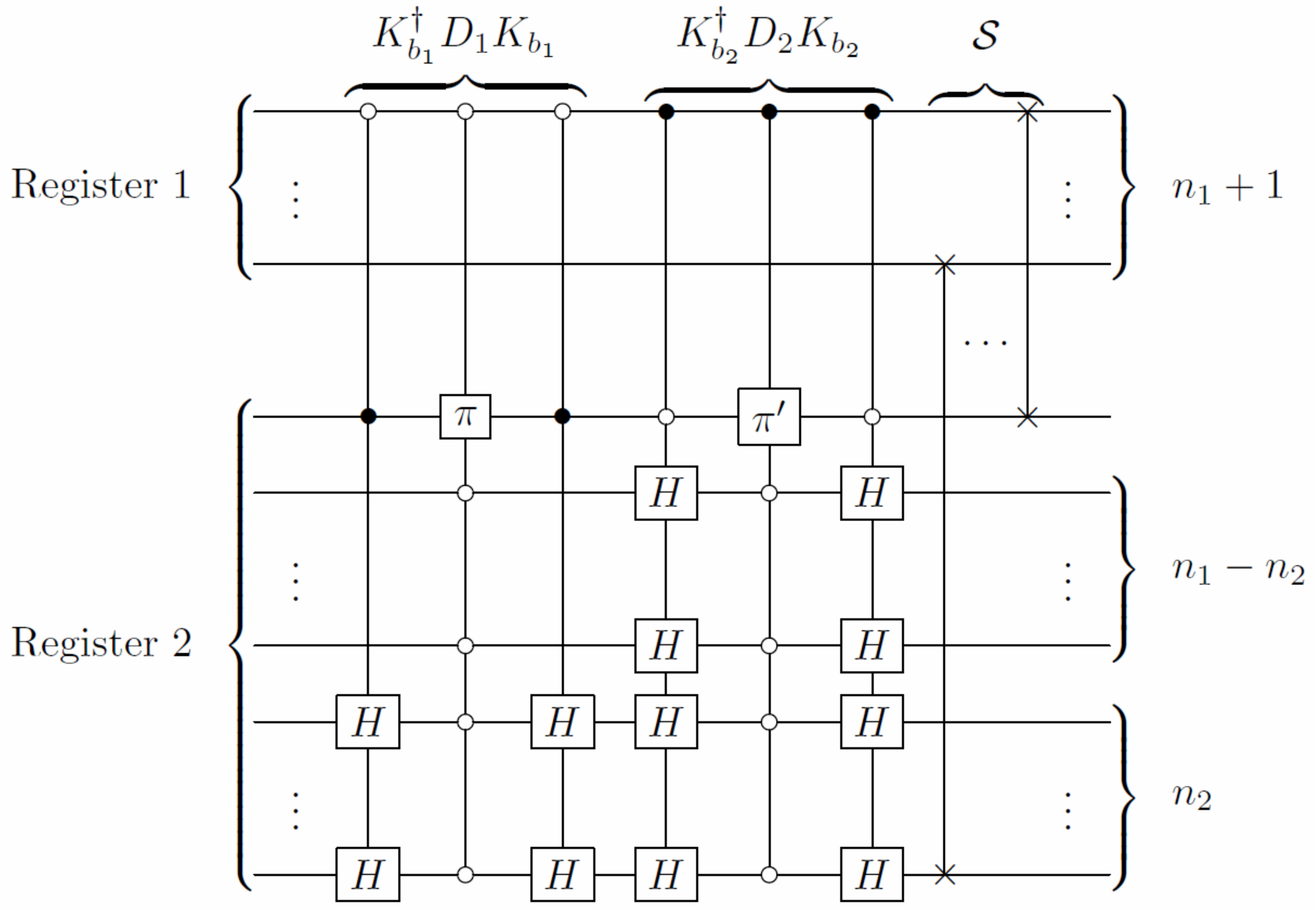}
	\caption{Quantum circuit implementing $U_{walk}$ for the $K_{2^{n_1},2^{n_2}}$ graph.}
	\label{fig:Kmnimpl}
\end{figure}

In the case where $N_1=N_2$, the complete bipartite graph is isomorphic to the circulant graph with $ \ket{\phi_0} = \{ 0, \sqrt{2/N_1}, 0, \sqrt{2/N_1}, ..., 0, \sqrt{2/N_1} \}^T $. For the particular case where $N_1 = 2^{n_1}$ and $N_2 = 2^{n_2}$ (where $n_1, n_2 \in \mathbb{Z}_+ $ and $n_1 \geq n_2$), the quantum circuit for the Szegedy walk operator $U_{walk}$ can be implemented explicitly without using the state preparation method using integrals, as shown in Figure \ref{fig:Kmnimpl}.


\section{Tensor product of Markov chains}
\label{sec:tensor}

Here, we consider the composition of Markov chains to produce a new Markov chain. The most natural composition of Markov chains is the tensor product, i.e. given two Markov chains defined by the transition matrices $P_1$ and $P_2$, a new Markov chain can be defined by $P = P_1 \otimes P_2$. 

\begin{theorem}
	The Szegedy walk operator $U_{walk}$ corresponding to the Markov chain with transition matrix $P = P_1 \otimes P_2$ can be efficiently implemented if the Szegedy walk operators $U_{walk,1}$ and $U_{walk,2}$ corresponding to the Markov chains with transition matrix $P_1$ and $P_2$ respectively can be efficiently implemented.
	\label{thm:tensor}
\end{theorem}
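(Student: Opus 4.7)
The plan is to apply Theorem~\ref{thm:mainres} directly to the composed walk by exhibiting a tensor-product decomposition of every ingredient from the corresponding ingredients of the factor walks. The key observation is that since $(P_1 \otimes P_2)_{(j_1,j_2),(i_1,i_2)} = (P_1)_{j_1,i_1}(P_2)_{j_2,i_2}$, the column roots factorize as $\ket{\phi_{i_1,i_2}} = \ket{\phi^{(1)}_{i_1}} \otimes \ket{\phi^{(2)}_{i_2}}$, so the projector states of the composed walk are themselves tensor products of the factor projector states.

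Given this, I would choose the reference state and the computational basis state for the composed walk to inherit the product form: $\ket{\phi_r} = \ket{\phi^{(1)}_{r_1}} \otimes \ket{\phi^{(2)}_{r_2}}$ and $\ket{b} = \ket{b_1} \otimes \ket{b_2}$, where each factor is the reference or basis state used in the construction of $U_{walk,m}$. Setting $T_{i_1,i_2} = T^{(1)}_{i_1} \otimes T^{(2)}_{i_2}$ and $K_b = K_{b_1} \otimes K_{b_2}$ makes the required identities $T_{i_1,i_2}\ket{\phi_{i_1,i_2}} = \ket{\phi_r}$ and $K_b\ket{b} = \ket{\phi_r}$ immediate. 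After a fixed qubit permutation that interleaves each factor's index sub-register with its value sub-register, the controlled family $\mathcal{T}$ of the composed walk becomes $\mathcal{T}^{(1)} \otimes \mathcal{T}^{(2)}$ acting on disjoint register groups; likewise $K_b$ acts as parallel copies of $K_{b_1}$ and $K_{b_2}$. Both are therefore efficient, so conditions 1 and 2 of Theorem~\ref{thm:mainres} are satisfied and the claim follows.

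The main bookkeeping task is the qubit relabeling, which is a wire permutation and adds no gates. The one genuine subtlety---which is the reason a proof is needed at all---is that the naive guess $U_{walk} = U_{walk,1} \otimes U_{walk,2}$ actually fails, since $2(\Pi_1 \otimes \Pi_2) - I \neq (2\Pi_1 - I) \otimes (2\Pi_2 - I)$; the reflection of the composed walk is about the tensor-product subspace, not the tensor product of the two reflections. This is why one must pass through the diagonalization framework of Theorem~\ref{thm:mainres} and rebuild $\mathcal{T}$ and $K_b$ for the composed walk as above, implicitly assuming (as is the convention of the paper) that ``efficient implementation of $U_{walk,m}$'' delivers the underlying $\mathcal{T}^{(m)}$ and $K_{b_m}$.
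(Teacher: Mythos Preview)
Your proposal is correct and follows essentially the same approach as the paper: both rest on the factorization $\ket{\phi_{i_1,i_2}} = \ket{\phi^{(1)}_{i_1}} \otimes \ket{\phi^{(2)}_{i_2}}$ and then apply the factor diagonalizers on the appropriately interleaved registers to reduce the composite reflection to the diagonal form $I \otimes (2\ket{b_1,b_2}\bra{b_1,b_2} - I)$. The paper carries this out directly at the level of the diagonalizing unitaries $U_1,U_2$ rather than by invoking Theorem~\ref{thm:mainres}, but the content is identical, including the implicit assumption (which you make explicit) that an efficient $U_{walk,m}$ comes with an efficient diagonalizer $U_m$.
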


\begin{proof}
	From equation (\ref{eqn:Uwalk}), if the walk operator can be efficiently implemented, it follows that the associated reflection operator $\mathcal{R}$ can also be efficiently implemented. From equations (\ref{eqn:psi}), (\ref{eqn:proj}) and (\ref{eqn:diag}), we have that:
	\begin{eqnarray}
		\displaystyle \mathcal{R}_1 &=& 2\left( \sum_{i=0}^{N_1-1} \ket{i}\bra{i} \otimes \ket{\phi_{i,1}}\bra{\phi_{i,1}} \right)-I_{N_1^2} \\
		U_1 \mathcal{R}_1 U_1^\dagger &=& 2\left( \sum_{i=0}^{N_1-1} \ket{i}\bra{i} \otimes \ket{b_1}\bra{b_1} \right)-I_{N_1^2}
	\end{eqnarray}
	\noindent where $\ket{\phi_{i,1}}$ is the square root of the $i$th column of $P_1$. Similarly:
	\begin{eqnarray}
		\displaystyle \mathcal{R}_2 &=& 2\left( \sum_{j=0}^{N_2-1} \ket{j}\bra{j} \otimes \ket{\phi_{j,2}}\bra{\phi_{j,2}} \right)-I_{N_2^2} \\
		\displaystyle U_2 \mathcal{R}_2 U_2^\dagger &=& 2\left( \sum_{j=0}^{N_2-1} \ket{j}\bra{j} \otimes \ket{b_2}\bra{b_2} \right)-I_{N_2^2}
	\end{eqnarray}
	\noindent where $\ket{\phi_{j,2}}$ is the square root of the $j$th column of $P_2$. For the composite system:
	\begin{equation}
		\displaystyle \mathcal{R} = 2\left( \sum_{k=0}^{N_1 N_2-1} \ket{k}\bra{k} \otimes \ket{\phi_{k}}\bra{\phi_{k}} \right)-I_{N_1^2 N_2^2},
	\end{equation}
	\noindent where $\ket{\phi_{k}}$ is the square root of the $k$th column of $P$. Writing $k = iN_2+j $, this becomes:
	\begin{eqnarray}
		\displaystyle \mathcal{R} &=& 2\left( \sum_{i=0}^{N_1-1}\sum_{j=0}^{N_2-1} \ket{i,j}\bra{i,j} \otimes \ket{\phi_{i,1},\phi_{j,2}}\bra{\phi_{i,1},\phi_{j,2}} \right)-I_{N_1^2 N_2^2} \\
		&=& 2\left( \sum_{i=0}^{N_1-1}\sum_{j=0}^{N_2-1} \ket{i}\bra{i} \otimes \ket{j}\bra{j} \otimes \ket{\phi_{i,1}}\bra{\phi_{i,1}} \otimes \ket{\phi_{j,2}}\bra{\phi_{j,2}} \right)-I_{N_1^2 N_2^2},
	\end{eqnarray}
	\noindent since $P = P_1 \otimes P_2$. Now, if we apply the diagonalizing operation $U_1$ to the $\ket{i}$ and $\ket{\phi_{i,1}}$ registers of $\mathcal{R}$:
	\begin{equation}
		\displaystyle U_1 \mathcal{R} U_1^\dagger = 2\left( \sum_{i=0}^{N_1-1}\sum_{j=0}^{N_2-1} \ket{i}\bra{i} \otimes \ket{j}\bra{j} \otimes \ket{b_1}\bra{b_1} \otimes \ket{\phi_{j,2}}\bra{\phi_{j,2}} \right)-I_{N_1^2 N_2^2}.
	\end{equation}
	Similarly, applying the diagonalizing operation $U_2$ to the $\ket{j}$ and $\ket{\phi_{j,2}}$ registers of $\mathcal{R}$ gives:
	\begin{equation}
		\displaystyle U_2 U_1 \mathcal{R} U_1^\dagger U_2^\dagger = 2\left( \sum_{i=0}^{N_1-1}\sum_{j=0}^{N_2-1} \ket{i}\bra{i} \otimes \ket{j}\bra{j} \otimes \ket{b_1}\bra{b_1} \otimes \ket{b_2}\bra{b_2} \right)-I_{N_1^2 N_2^2},
	\end{equation}
	\noindent which can be rewritten as:
	\begin{equation}
		\displaystyle U_2 U_1 \mathcal{R} U_1^\dagger U_2^\dagger = I_{N_1 N_2} \otimes ( 2 \ket{b} \bra{b} - I_{N_1 N_2} ) \equiv D,
	\end{equation}
	\noindent where $\ket{b} = \ket{b_1,b_2}$ and $D$ can be readily implemented using a controlled-$\pi$ gate (with multiple conditionals). Hence $ \mathcal{R} = U_1^\dagger U_2^\dagger D U_2 U_1 $ (and hence $U_{walk}$) can be efficiently implemented when $U_{walk,1}$ and $U_{walk,2}$ can be efficiently implemented, as required.
\end{proof}

\begin{figure}[htp]
	\centering
	\includegraphics[scale=0.35]{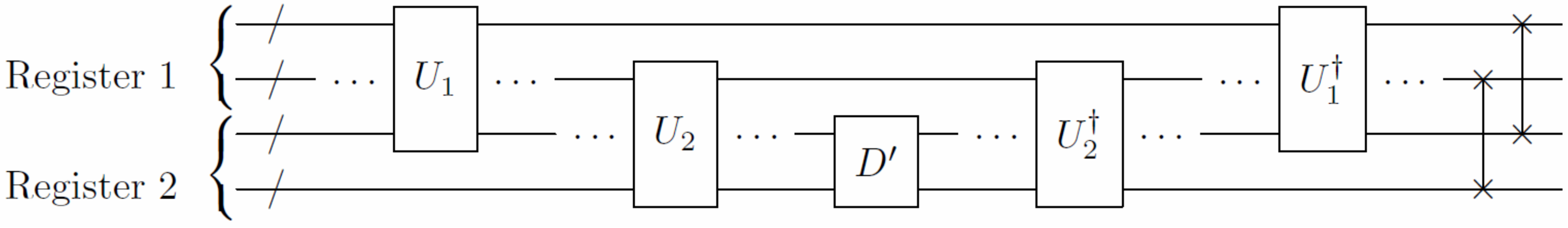}
	\caption{Quantum circuit implementing $U_{walk}$ corresponding to the Markov chain with transition matrix $P = P_1 \otimes P_2$, with $D' = 2 \ket{b_1,b_2}\bra{b_1,b_2} - I_{N_1 N_2} $.}
	\label{fig:Uimpl2}
\end{figure}

Figure \ref{fig:Uimpl2} shows the general quantum circuit for implementing $U_{walk}$ corresponding to the Markov chain with transition matrix $P = P_1 \otimes P_2$. The above result can also be easily extended to multiple systems as follows:

\begin{corollary}
	The Szegedy walk operator $U_{walk}$ corresponding to the Markov chain with transition matrix $\displaystyle P = P_1 \otimes \ldots \otimes P_p $ can be efficiently implemented if the Szegedy walk operators $U_{walk,1},\ldots,U_{walk,p}$ corresponding to the Markov chains with transition matrix $P_1,\ldots,P_p$ respectively can be efficiently implemented.
	\label{crl:tensorp}
\end{corollary}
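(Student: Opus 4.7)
The plan is to prove Corollary \ref{crl:tensorp} by straightforward induction on $p$, using Theorem \ref{thm:tensor} as the base case. The base case $p = 2$ is exactly the content of Theorem \ref{thm:tensor}. For the inductive step, I would assume the claim for $p - 1$ chains and exploit the associativity of the tensor product to write $P = P' \otimes P_p$, where $P' = P_1 \otimes \ldots \otimes P_{p-1}$. The inductive hypothesis yields an efficient circuit for the Szegedy walk $U_{walk}'$ associated with $P'$, and then one more application of Theorem \ref{thm:tensor} to the pair $(P', P_p)$ delivers an efficient circuit for $U_{walk}$ associated with $P$.

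Equivalently, one could give a direct proof mirroring the argument for Theorem \ref{thm:tensor}. The reflection operator $\mathcal{R}$ for the composite system decomposes, via $P = P_1 \otimes \ldots \otimes P_p$, into a sum over index registers $\ket{i_1,\ldots,i_p}\bra{i_1,\ldots,i_p}$ tensored with $\ket{\phi_{i_1,1}}\bra{\phi_{i_1,1}} \otimes \ldots \otimes \ket{\phi_{i_p,p}}\bra{\phi_{i_p,p}}$. Because the diagonalizing operators $U_k$ for the individual chains act on disjoint pairs of registers, they commute, and the product $U_p \cdots U_1$ conjugates $\mathcal{R}$ into $I_{N_1 \cdots N_p} \otimes (2\ket{b_1,\ldots,b_p}\bra{b_1,\ldots,b_p} - I_{N_1 \cdots N_p})$, which is implementable by a single multiply-controlled phase gate. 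The composite swap $\mathcal{S}$ factors as a product of swaps on the individual sub-registers, so it remains efficient.

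The only point that requires comment is that the resulting gate count stays polynomial in $\log(N_1 \cdots N_p)$. Since each efficient circuit for $U_{walk,k}$ uses $O(\mathrm{poly}(\log N_k))$ gates and we assemble $p$ of them together with the diagonal phase and the swap, the total count is $\sum_{k=1}^{p} O(\mathrm{poly}(\log N_k))$, which is bounded by $p \cdot O(\mathrm{poly}(\log(N_1 \cdots N_p)))$. Because each $N_k \geq 2$, we have $p \leq \log_2(N_1 \cdots N_p)$, so this bound is itself polylogarithmic in the total dimension. There is no real obstacle here; the main thing to verify is that this polynomial bookkeeping goes through, which it does cleanly.
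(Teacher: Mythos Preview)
Your proposal is correct and takes essentially the same approach as the paper, whose entire proof reads ``This follows easily by applying Theorem \ref{thm:tensor} recursively.'' Your inductive argument is exactly that recursive application spelled out, and your additional remarks on the direct diagonalization and the polylogarithmic gate-count bookkeeping (in particular the observation that $p \le \log_2(N_1\cdots N_p)$) are more careful than what the paper records.
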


\begin{proof}
	This follows easily by applying Theorem \ref{thm:tensor} recursively.
\end{proof}

In the case of Markov chains on graphs described by adjacency matrices, the tensor product of transition matrices correspond to taking the graph tensor product \cite{sampathkumar_tensor_1975} of the underlying graphs. One important class of graphs formed from a tensor product of graphs is the bipartite double cover of a given graph $G$. If the adjacency matrix of $G$ is given by $A$, then the bipartite double cover of $G$ is given by the adjacency matrix $A \otimes K_2$, where $ K_2 = \begin{pmatrix} 0&1 \\ 1&0 \end{pmatrix} $ \cite{brouwer_distance-regular_1989}. From Theorem \ref{thm:tensor}, it follows that in order to implement the Szegedy walk on the bipartite double cover of a given graph, we need to be able to implement the Szegedy walk on the $K_2$ graph. Figure \ref{fig:K2} shows the quantum circuit implementation of the Szegedy walk with transition matrix $P = K_2$.

\begin{figure}[htp]
	\centering
	\includegraphics[scale=0.15]{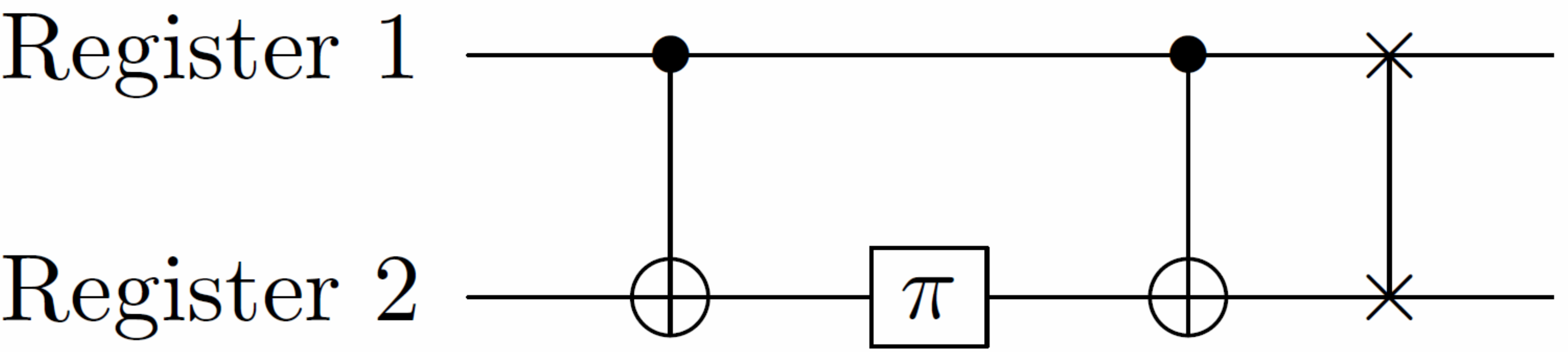}
	\caption{Quantum circuit implementing $U_{walk}$ for the $K_2$ graph.}
	\label{fig:K2}
\end{figure}

For example, the crown graph $S_N^0$ (example shown in Figure \ref{fig:Sn0}) can be formed by the bipartite double cover of the complete graph $K_N$ \cite{brouwer_distance-regular_1989}, i.e. $S_N^0 = K_N \otimes K_2$. From section \ref{sec:cyclic}, we already have an efficient implementation for the Szegedy walk operator for the $K_N$ graph (where $N=2^n$ for some $n \in \mathbb{Z}_+$)---specifically, refer to Figures \ref{fig:KnKb} and \ref{fig:Knimpl}. Hence, applying Theorem \ref{thm:tensor}, this means that we also have an efficient implementation for the crown graph $S_N^0$. By identifying $U_1$ and $U_2$ with the diagonalizing operations in Figures \ref{fig:Knimpl} and \ref{fig:K2} respectively, and taking $\ket{b_1} = \ket{0}$ and $\ket{b_2} = \ket{1}$ (giving $\ket{b_1,b_2}=\ket{0,1}$), we can implement the Szegedy walk operator $U_{walk}$ for the crown graph $S_N^0$, as shown in Figure \ref{fig:Sn0circ}.

\begin{figure}[htp]
	\centering
	\includegraphics[scale=0.30]{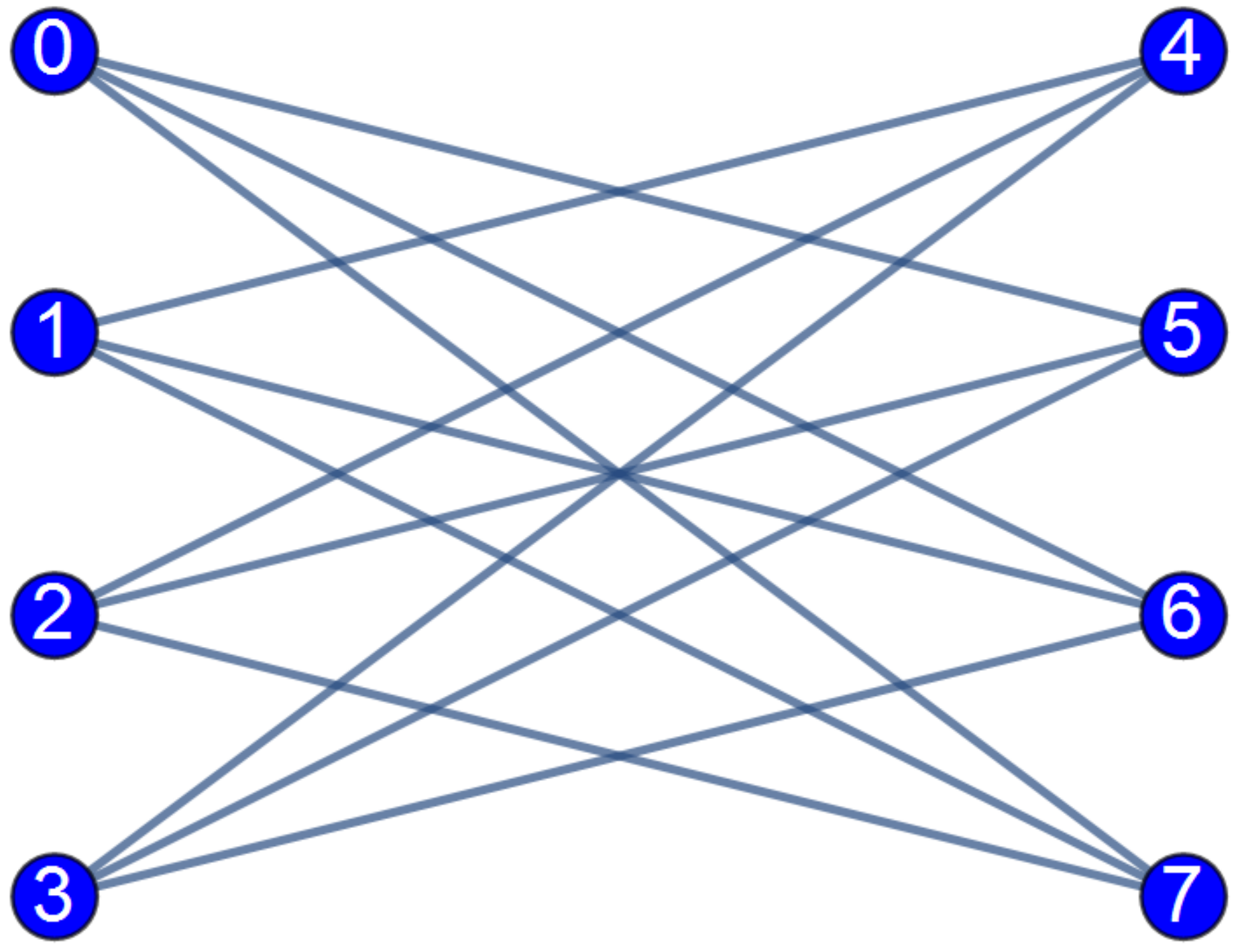}
	\caption{Crown graph $S_N^0$ with parameter $N=4$.}
	\label{fig:Sn0}
\end{figure}

\begin{figure}[htp]
	\centering
	\includegraphics[scale=0.32]{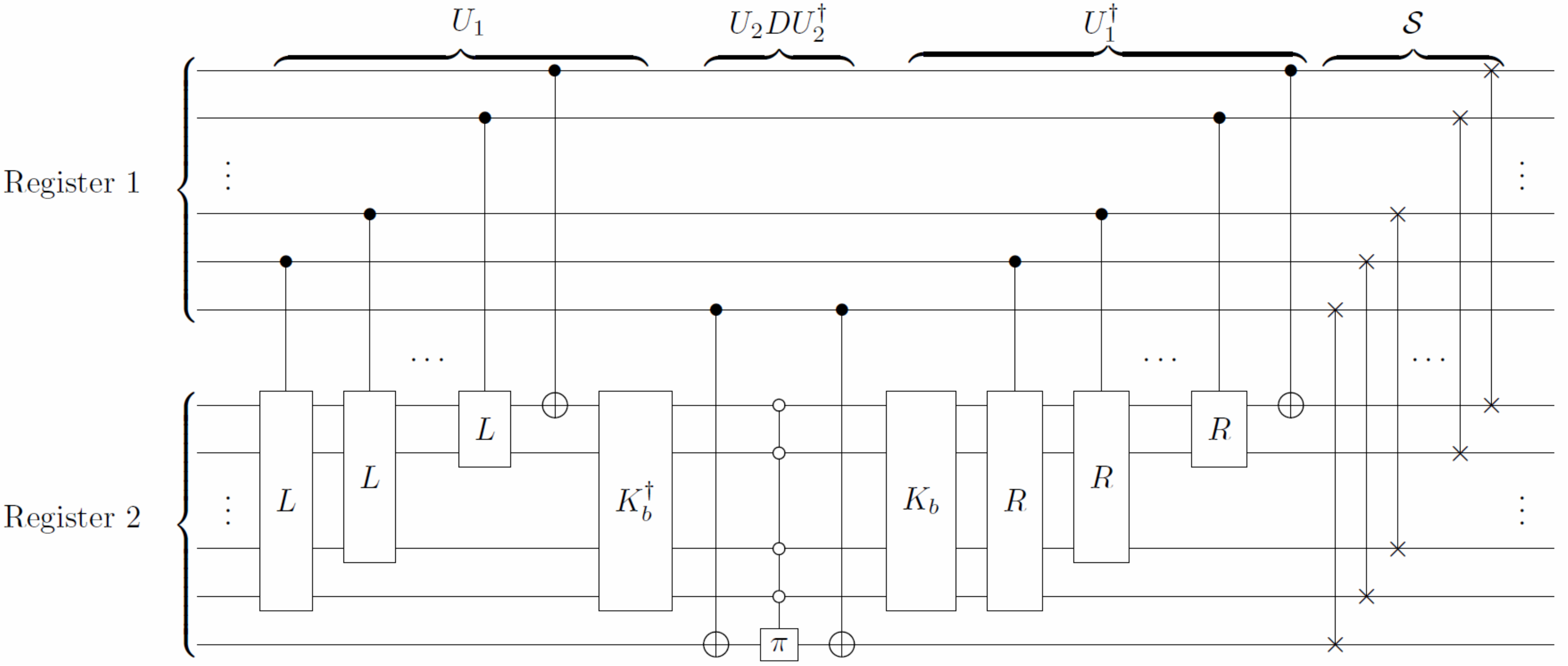}
	\caption{Quantum circuit implementing $U_{walk}$ for the $S_N^0$ graph, where the implementation of $K_b$ is given by Figure \ref{fig:KnKb}.}
	\label{fig:Sn0circ}
\end{figure}

\section{Weighted interdependent networks}
\label{sec:win}

A composition of Markov chains in the form of a weighted interconnected network (a generalization of the interdependent network defined in \cite{nehaniv_construction_2014}) can also be considered. Suppose we have the transition matrices $ P_A = \begin{pmatrix} A_1&0 \\ 0&A_2 \end{pmatrix} $ and $ P_B = \begin{pmatrix} 0&B_2 \\ B_1&0 \end{pmatrix} $ (where $A_1$, $A_2$, $B_1$, and $B_2$ have dimensions $N_1$-by-$N_1$, $N_2$-by-$N_2$, $N_2$-by-$N_1$, and $N_1$-by-$N_2$ respectively). Then a weighted interdependent network formed from $P_A$ and $P_B$ has the form:

\begin{equation}
P = \begin{pmatrix} \alpha_1 A_1& (1-\alpha_2)B_2 \\ (1-\alpha_2) B_1& \alpha_2 A_2 \end{pmatrix},
\end{equation}

\noindent where $\alpha_1,\alpha_2 \in [0,1]$ are independent real parameters. Note that this definition preserves column-normalization, assuming $P_A$ and $P_B$ are column-normalized. Under some circumstances, it is possible to show that the Szegedy walk operator corresponding to $P$ is efficiently implementable if the Szegedy walk operator corresponding to $P_A$ and $P_B$ are efficiently implementable. However, for this section, we will simply demonstrate the efficient implementation of a particular case.

\begin{figure}[htp]
	\centering
	\includegraphics[scale=0.25]{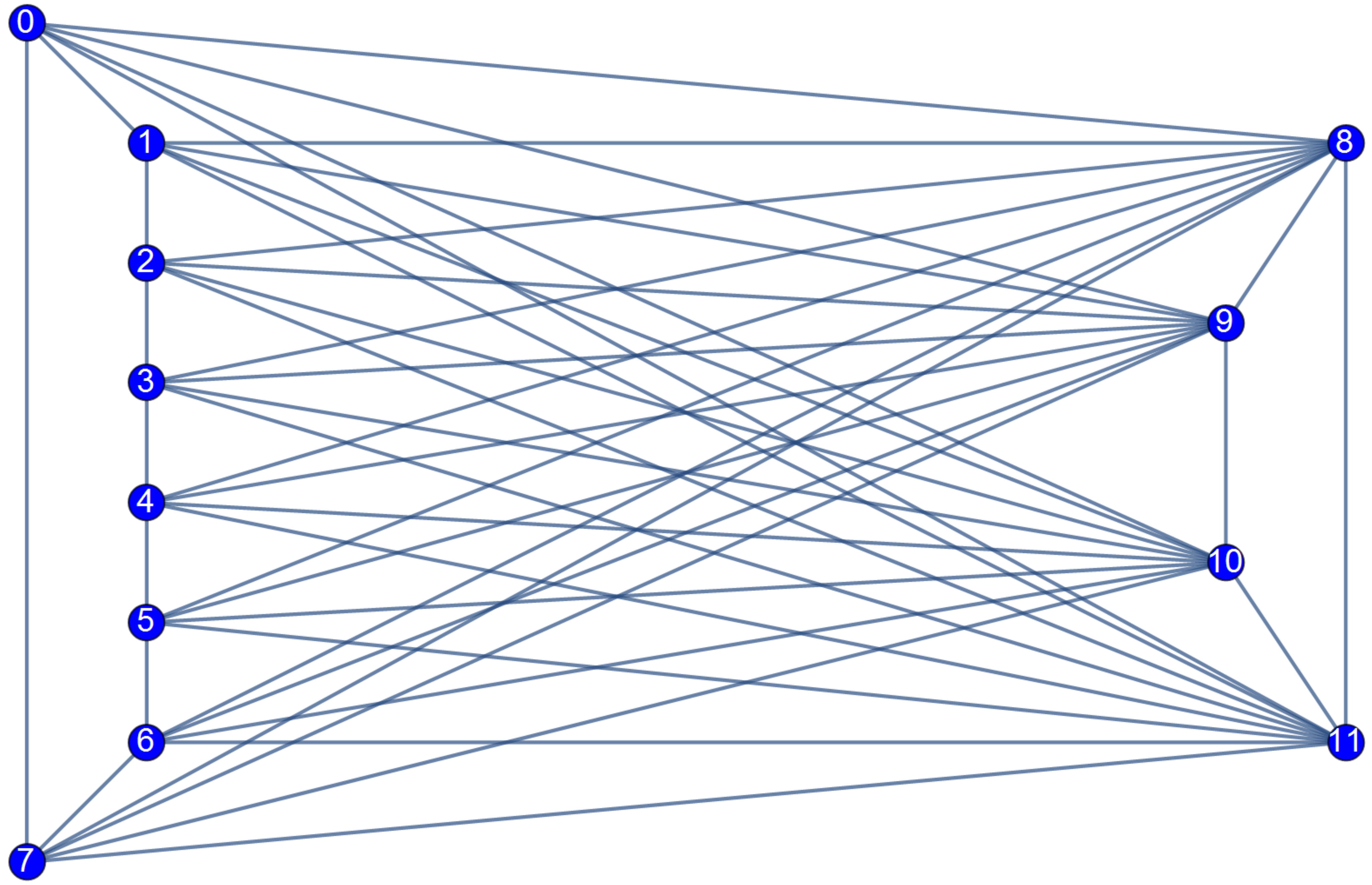}
	\caption{Weighted interdependent network with parameters $N_1 = 8$ and $N_2 = 4$.}
	\label{fig:CKg}
\end{figure}

Consider the weighted interdependent network (example shown in Figure \ref{fig:CKg}) formed by connecting two disjoint cycle graphs $C_{N_1}$ and $C_{N_2}$ using complete interconnections (i.e. a complete bipartite graph $K_{N_1,N_2}$). Mathematically, this corresponds to:

\begin{equation}
P_A = \begin{pmatrix} \frac{1}{2}C_{N_1}&0 \\ 0&\frac{1}{2}C_{N_2} \end{pmatrix} \mbox{ and  } P_B = \begin{pmatrix} 0&\frac{1}{N_2}J_{N_1,N_2} \\ \frac{1}{N_1}J_{N_2,N_1}&0 \end{pmatrix},
\end{equation} 

\noindent which, from sections \ref{sec:cyclic} and \ref{sec:kmn}, we have an efficient implementation for the Szegedy walk operator of each. We can combine these Markov chains to form a weighted interconnected network, giving the transition matrix:

\begin{equation}
P = \begin{pmatrix} \frac{1}{2+N_2}C_{N_1}&\frac{1}{2+N_1}J_{N_1,N_2} \\ \frac{1}{2+N_2}J_{N_2,N_1}&\frac{1}{2+N_1}C_{N_2} \end{pmatrix},
\end{equation}

\noindent where we have chosen the parameters $\alpha_1$ and $\alpha_2$ such that each non-zero element in a column is weighted equally. Set the partition as $Z = Z_1 \cup Z_2$ where $Z_1 = \{0,\ldots,N_1-1\}$ and $Z_2 = \{N_1,\ldots,N_1+N_2-1\}$. For the subsets $Z_1$ and $Z_2$, each column has $(2+N_2)$ and $(2+N_1)$ non-zero elements respectively. Select the reference states as $\ket{\phi_{r_1}}=\ket{\phi_0}$ and $\ket{\phi_{r_2}}=\ket{\phi_{N_1-1}}$, where $\ket{\phi_0}=\frac{1}{\sqrt{2+N_2}}\{0,1,0,\ldots,0,1,1,\ldots,1\}^T$ and $\ket{\phi_{N_1-1}}=\frac{1}{\sqrt{2+N_1}}\{1,\ldots,1,0,1,0,\ldots,0,1\}^T$. For $Z_1$, the transformations $ T_{1,y}: \ket{\phi_y} \rightarrow \ket{\phi_0} $ can be defined as a (restricted) cyclic permutation of the reference state $\ket{\phi_0}$, i.e. $ T_{1,y} = L^y $ over the first $N_1$ rows. For $Z_2$, the transformations $ T_{2,y}: \ket{\phi_y} \rightarrow \ket{\phi_{N_1-1}} $ can be defined in similar fashion as $ T_{2,y} = L^{y-(N_1-1)} $ over the last $N_2$ rows. In the case where $N_1 = 2^{n_1}$ and $N_2 = 2^{n_2}$ (where $n_1, n_2 \in \mathbb{Z}_+ $ and $n_1 \geq n_2$), the quantum circuit for the Szegedy walk operator $U_{walk}$ can be implemented explicitly, as shown in Figure \ref{fig:CK}

\begin{figure}[htp]
	\centering
	\subfigure[\mbox{ }$K_{b_{A_1}}$]{\includegraphics[scale=0.25]{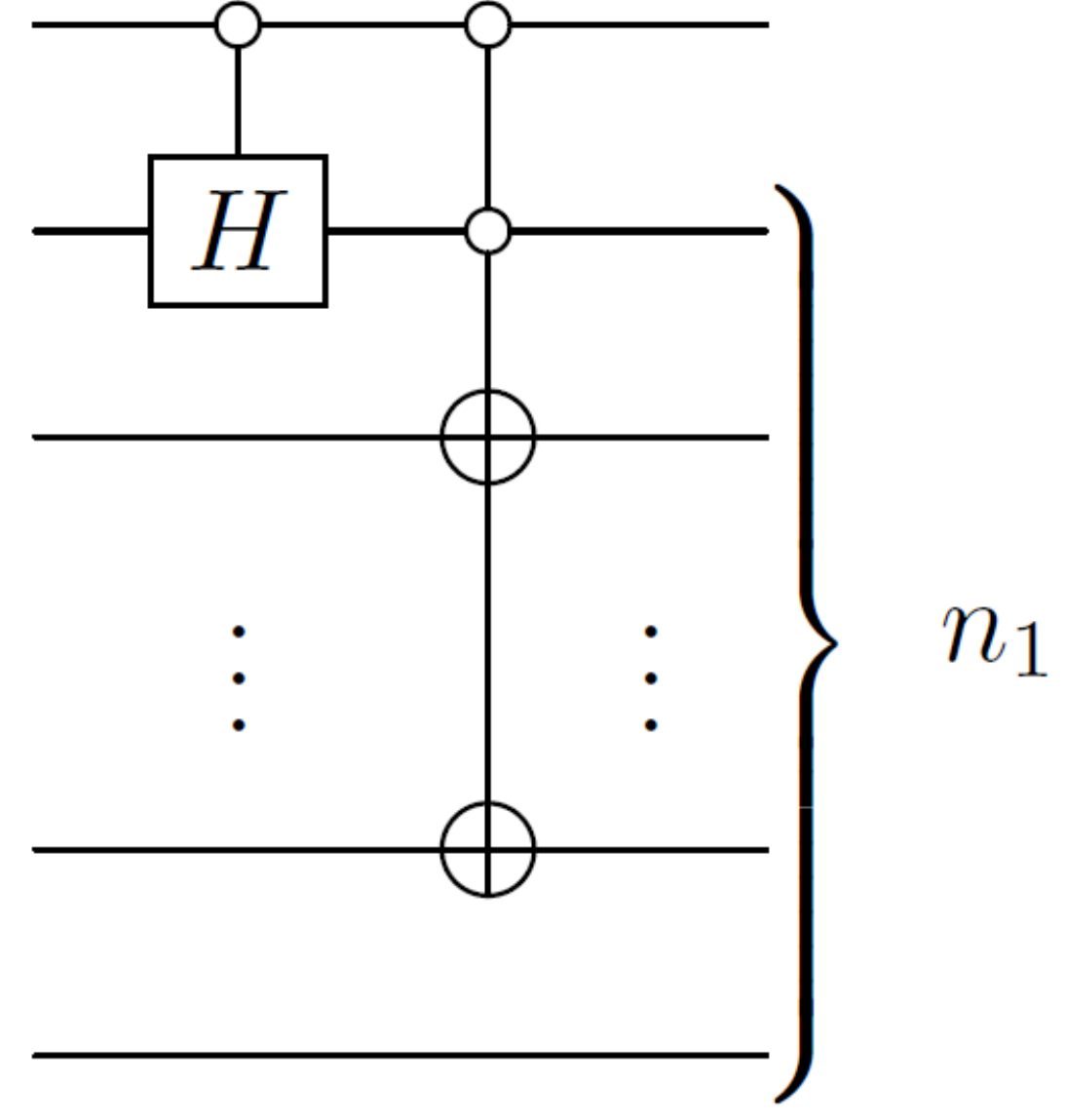}\qquad}
	\subfigure[\mbox{ }$K_{b_{B_1}}$]{\includegraphics[scale=0.25]{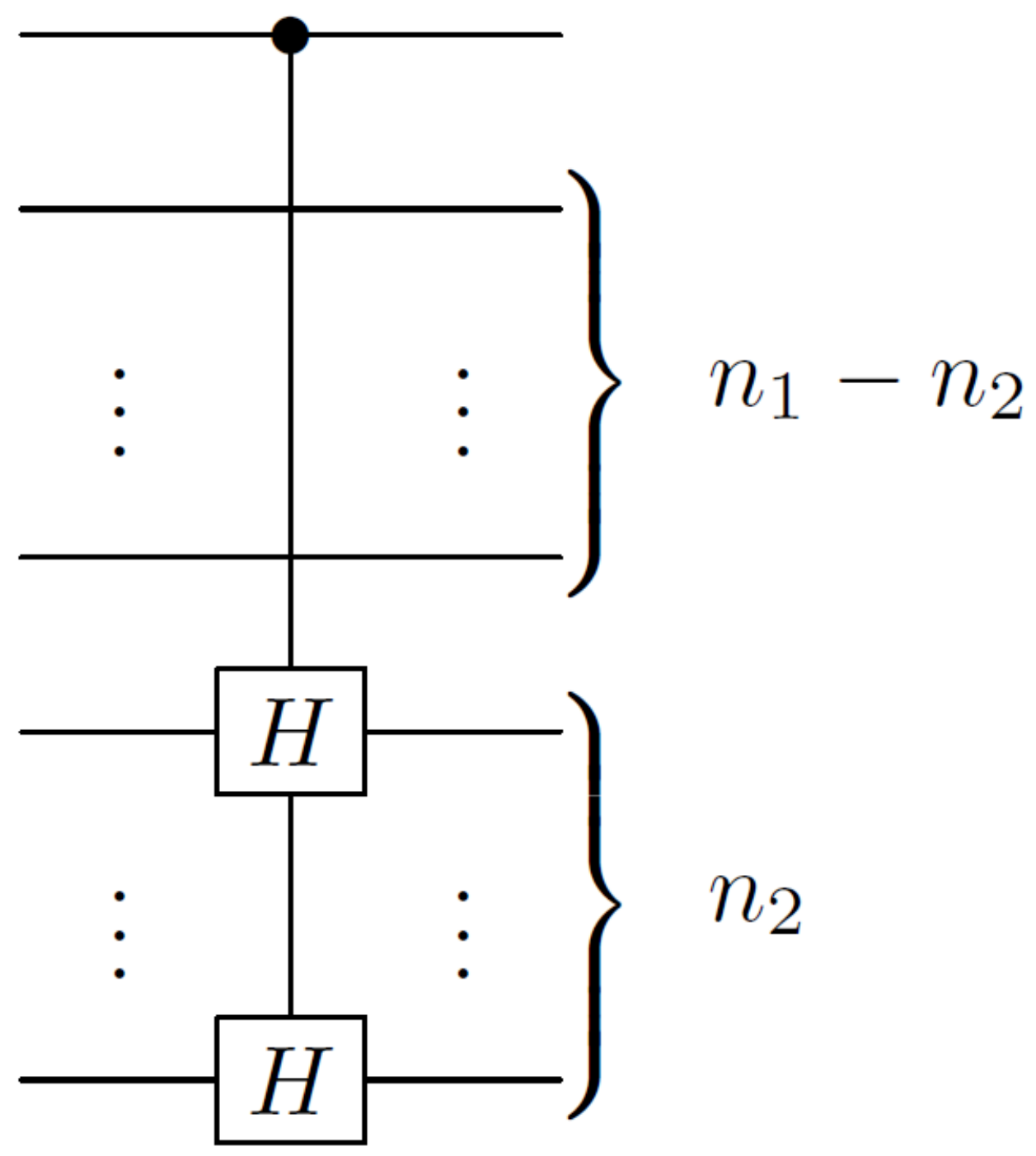}\qquad}
	\subfigure[\mbox{ }$K_{b_1}$]{\includegraphics[scale=0.25]{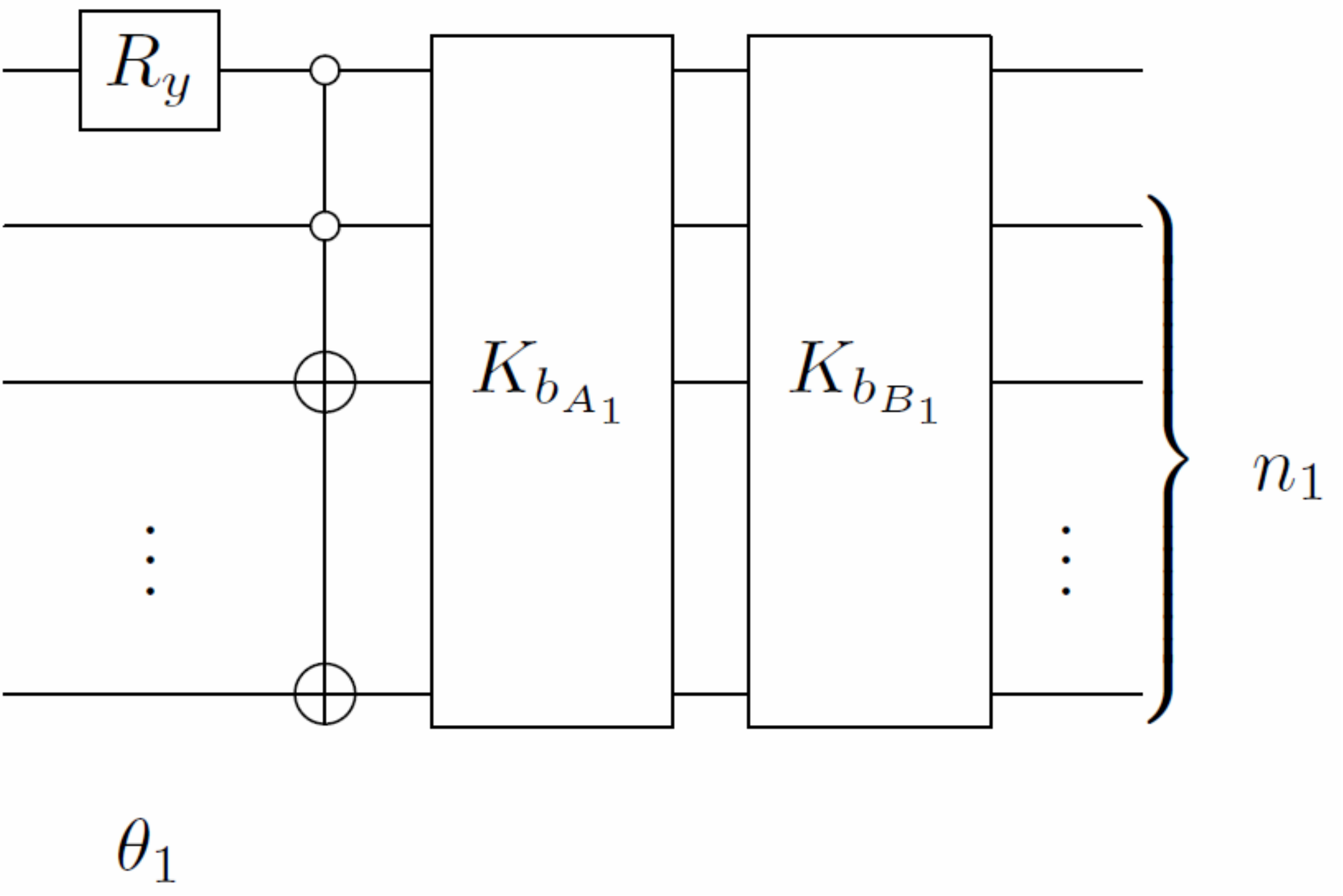}}
	\subfigure[\mbox{ }$K_{b_{B_2}}$]{\includegraphics[scale=0.25]{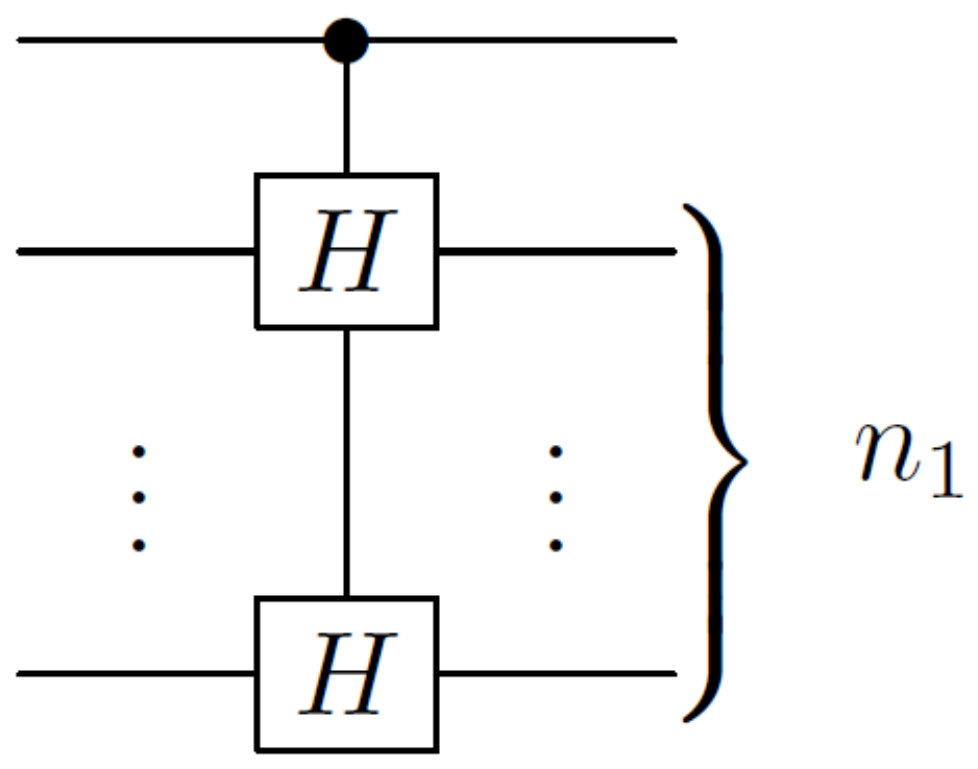}\qquad}
	\subfigure[\mbox{ }$K_{b_{A_2}}$]{\includegraphics[scale=0.25]{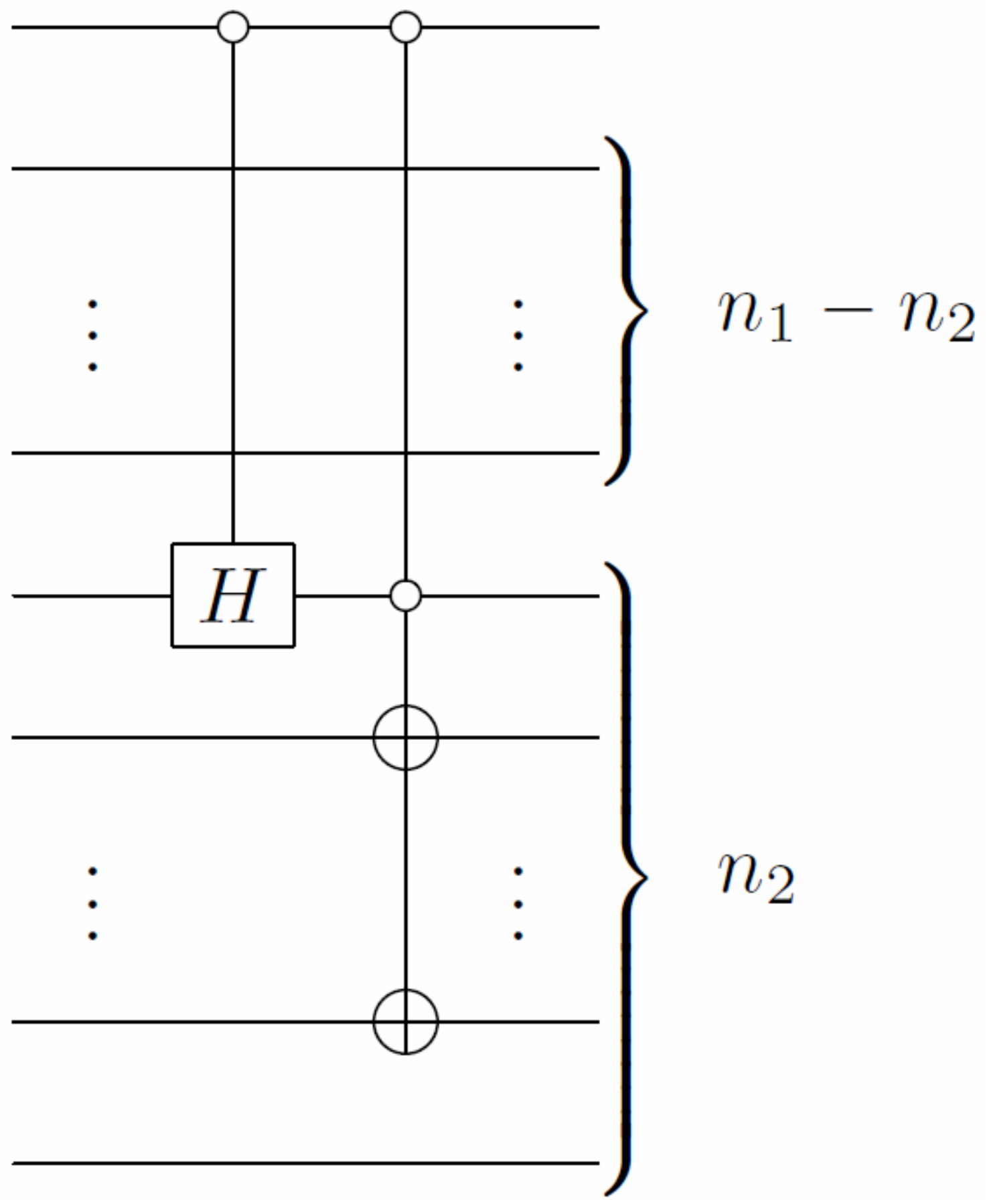}\qquad}
	\subfigure[\mbox{ }$K_{b_2}$]{\includegraphics[scale=0.25]{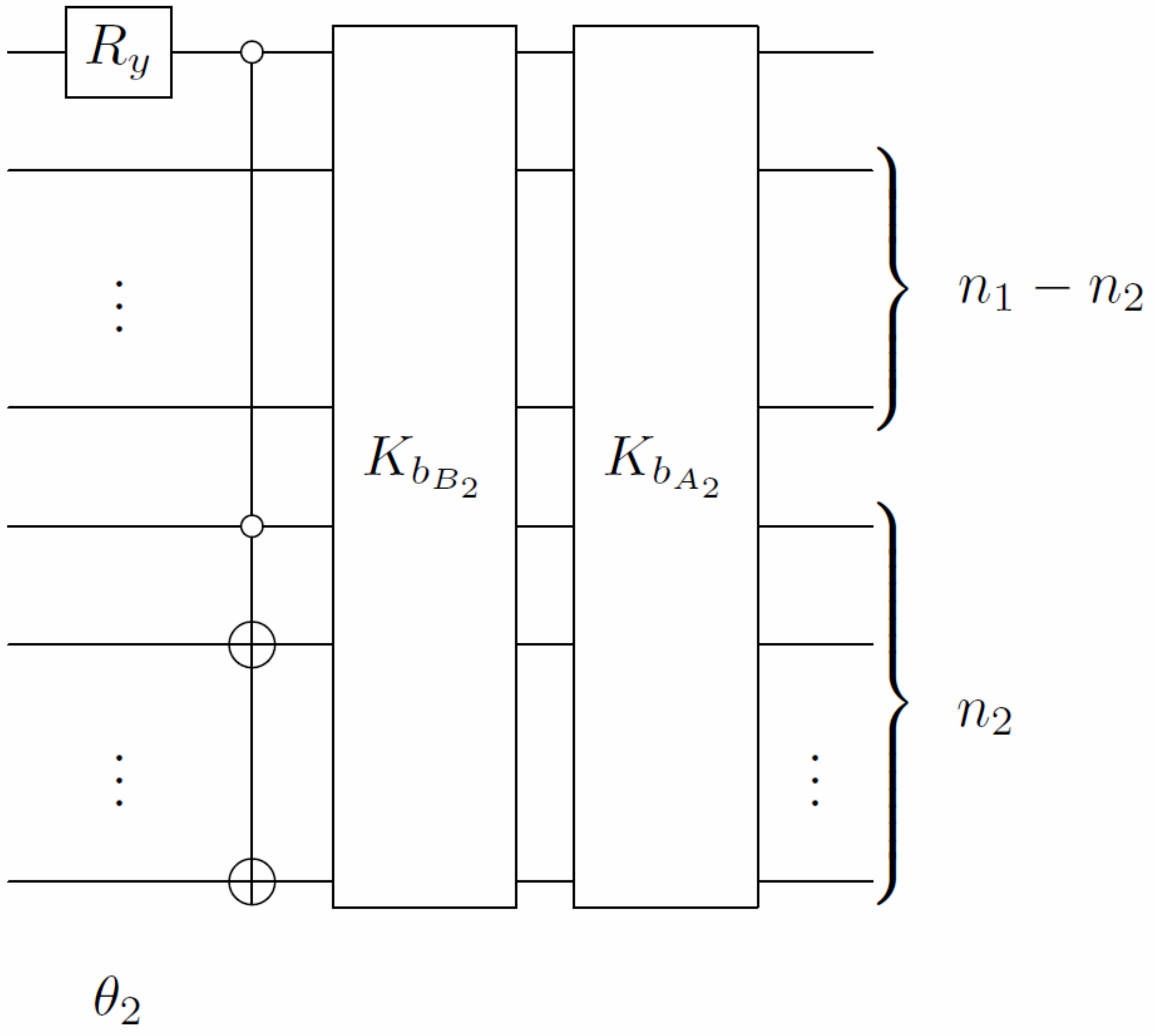}}
	\subfigure[Complete circuit for $ U_{walk} $]{\includegraphics[scale=0.32]{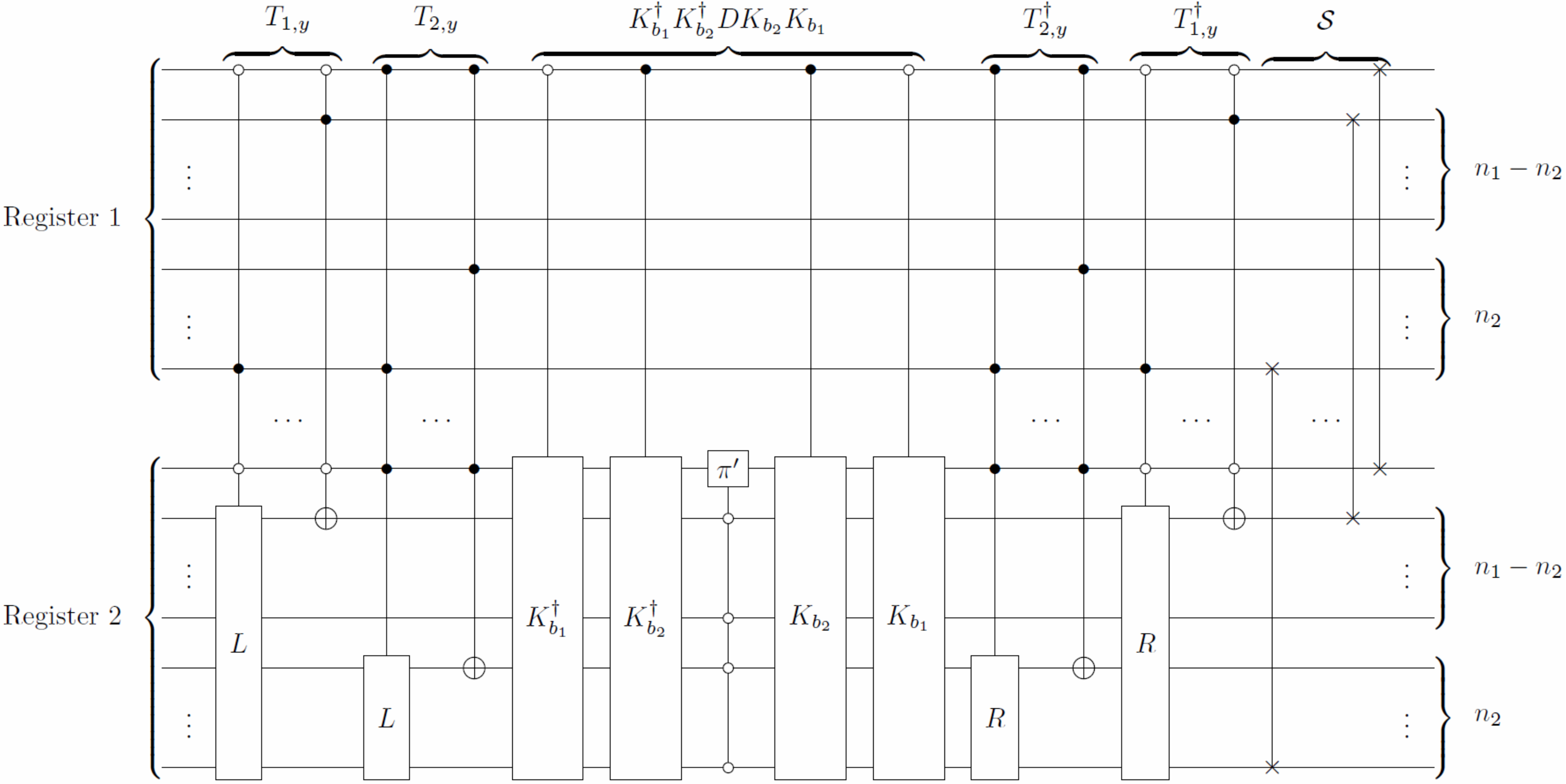}}
	\caption{The quantum circuit implementing $U_{walk}$ for the weighted interdependent network formed from disjoint cycle graphs connected using complete interconnections is shown in (g). The circuits implementing the preparation routines $K_{b_1} : \ket{0} \rightarrow \ket{\phi_0}$ and $K_{b_2} : \ket{0} \rightarrow \ket{\phi_{N_1-1}}$ are shown in (c) and (f) respectively, with rotation angles $\mbox{cos}(\theta_{1})=\sqrt{\frac{2}{2+N_2}}$ and $\mbox{cos}(\theta_{2})=\sqrt{\frac{N_1}{2+N_1}}$ respectively.}
	\label{fig:CK}
\end{figure}

\section{Application: Quantum Pagerank algorithm}
\label{sec:google}

The quantum Pagerank algorithm \cite{paparo_google_2012, paparo_quantum_2013} is a quantization of the classical Google Pagerank algorithm \cite{bryan_25000000000_2006} that is used to calculate the relative importance of nodes in a directed graph. It utilises Szegedy quantum walks to quantize the classical Markov chain, and the average probability of the wavefunction at a node is taken to be the measure of relative importance of that node.

Suppose that we have a directed graph (representing a network) described by a connectivity matrix $C$, where $C_{i,j} = 1$ if there is a link $j \rightarrow i$. Then, the patched connectivity matrix $E$ is defined as:

\begin{equation}
E_{i,j} = 
\begin{cases}
1/N & \mbox{outdeg}(j) = 0 \\
C_{i,j}/\mbox{outdeg}(j)  & \mbox{otherwise},
\end{cases}
\end{equation}

\noindent where $\mbox{outdeg}(j) = \displaystyle\sum_{i=1}^N C_{i,j}$ is the out-degree of vertex $j$. The Google matrix $G$ is then:

\begin{equation}
G = \alpha E + \frac{1-\alpha}{N}J,
\label{eqn:Gmat}
\end{equation}

\noindent where $\alpha \in [0,1]$ is the damping parameter (typically chosen to be $0.85$) and $J$ is the all-1s matrix. Taking the transition matrix as $P = G$, we can define the Szegedy walk operator $U_{walk}$ given by equation (\ref{eqn:Uwalk}). Then the instantaneous quantum Pagerank of the $j$th vertex is given by:

\begin{equation}
Q(j,t) = | \bra{j}_2 U_{walk}^{2t} \ket{\psi_0} |^2,
\end{equation}

\noindent where $\bra{j}_2 = (\ket{j}_2)^\dag$ and $\ket{j}_2$ is the $j$th standard basis vector of the second Hilbert space $\mathcal{H}_2$. The initial state $\ket{\psi_0}$ is taken to be an equal superposition over the $\ket{\psi_i}$ as defined in equation (\ref{eqn:psi}), i.e.:

\begin{equation} 
\ket{\psi_0} = \frac{1}{\sqrt{N}} \displaystyle\sum_{i=0}^{N-1} \ket{\psi_i}.
\end{equation}

The average quantum Pagerank for a vertex $j$, over some number of steps $T$, is defined as:

\begin{equation}
\langle Q(j) \rangle = \frac{1}{T} \displaystyle\sum_{t=0}^{T-1} Q(j,t),
\end{equation}

\noindent which can be shown to converge for sufficiently large $T$---and this is the quantity that is called the quantum Pagerank of a graph \cite{paparo_google_2012, paparo_quantum_2013}. Here, we are interested in simulating the walk operator $ U_{walk} $ using the Google matrix $G$ as the transition matrix, as this is necessary in order to compute the Google pagerank of the system. Note that from the definition of $G$ in equation (\ref{eqn:Gmat}), $G$ is not a sparse matrix whenever $\alpha \neq 1$, so it is not simulable by the methods of Chiang \emph{et al.} \cite{chiang_efficient_2010}.

Now, if $E$ is a matrix with columns related by cyclic permutations as defined before, then $G$ also has the same property, since addition by $\frac{1-\alpha}{N}\mathbf{1}$ does not change the cyclic permutation property, i.e. $T_i$ is the same as in section \ref{sec:cyclic} (but $K_b$ does change---see next paragraph). Of course, if all columns are related by cyclic permutations, then the Pagerank of each vertex would be equal---however we can exploit partitioning into subsets (from equations \ref{eqn:gen1}-\ref{eqn:gen4}) and cyclic symmetry to simulate the Szegedy walk using the Google matrix $G$ as the transition matrix on a graph that has non-equivalent sets of vertices.

In general, in order to prepare any column state $\ket{\phi_i}$ of the Google matrix $G$, we note that if the corresponding column state $\ket{\phi_i'}$ of the patched connectivity matrix $E$ can be prepared by the state preparation method using integrals (i.e. there exists an efficiently integrable function $p_i'(x)$ that generates the probability distribution $\left\{E_{j,i}\right\}$), then $p_i(x) = \alpha p_i'(x) + \frac{1-\alpha}{N L}$ (where $L$ is the length of the domain of $x$) is also an efficiently integrable function that generates $\left\{G_{j,i}\right\}$, i.e. the required column state $\ket{\phi_i}$. Hence if Corollary \ref{crl:mainres} can be applied to the patched connectivity matrix $E$, then it follows that it can also be applied to the Google matrix $G$ by changing the probability density function as above.

\begin{figure}[htp]
	\centering
	\subfigure[Undirected]{\label{fig:W8undirected} \includegraphics[scale=0.20]{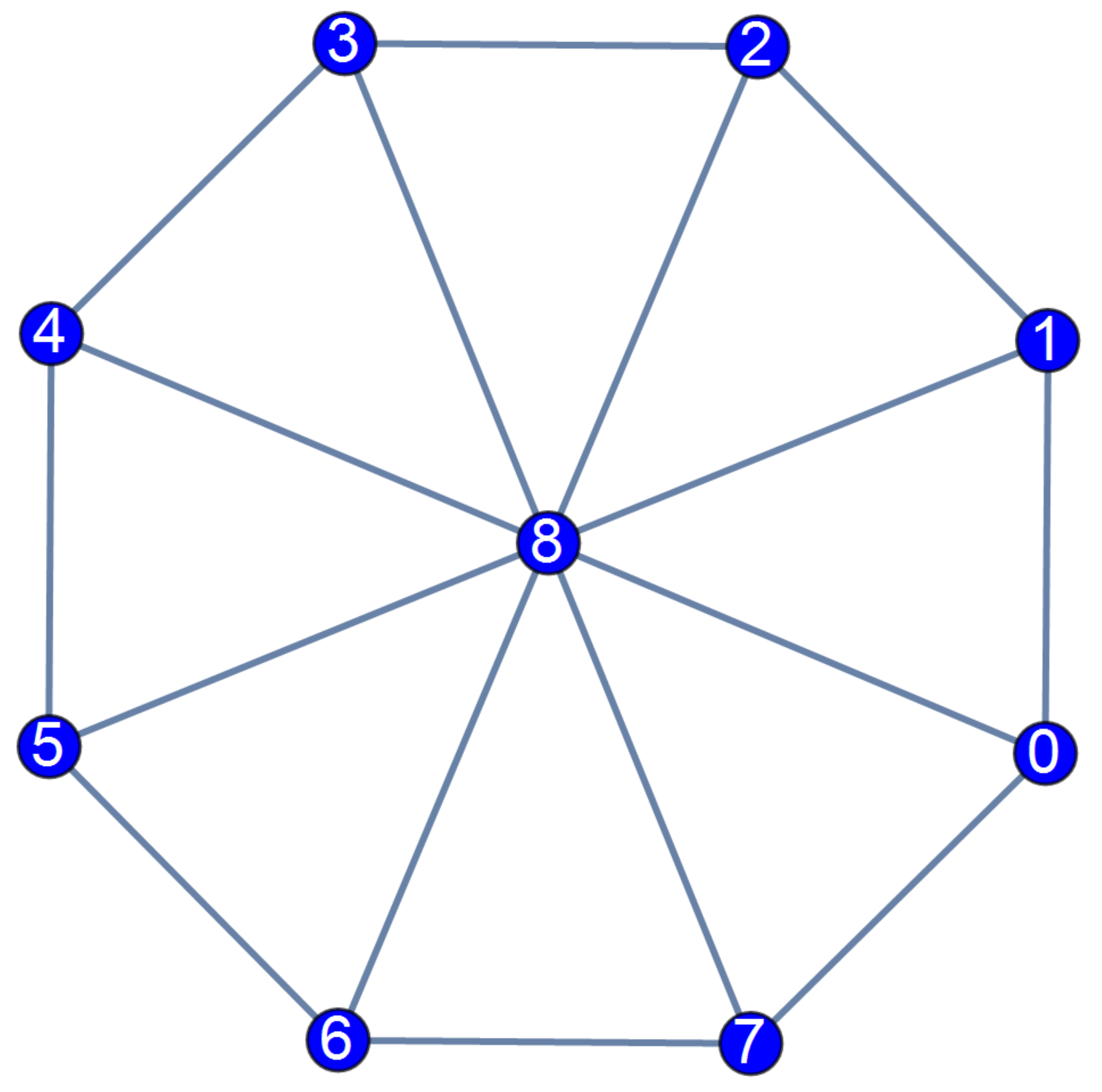}\qquad}
	\subfigure[Directed]{\label{fig:W8directed} \includegraphics[scale=0.20]{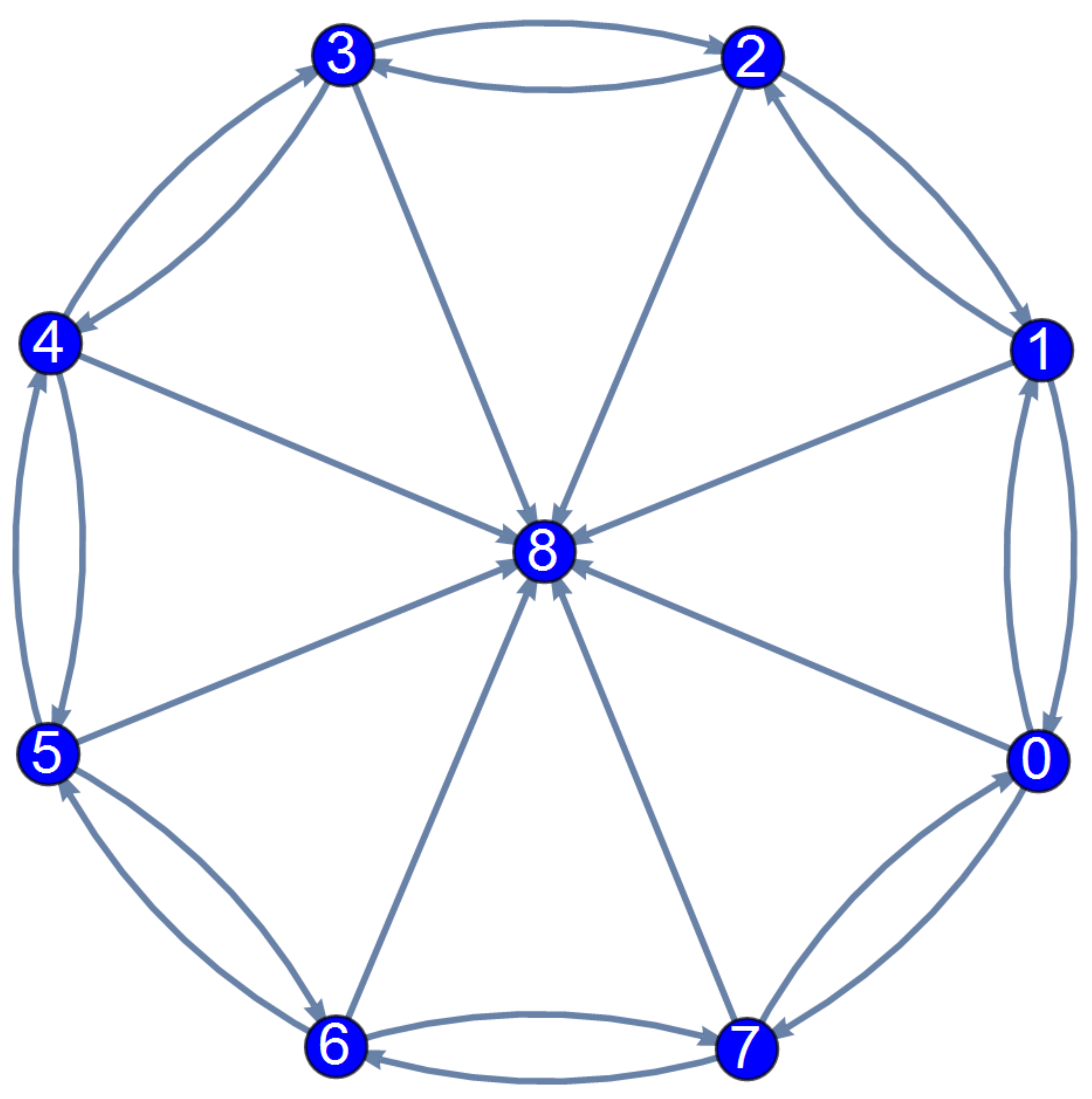}}
	\caption{Undirected wheel graph $W_N$ and its directed variant $W_N'$ with parameter $N=8$.}
	\label{fig:W8}
\end{figure}

As before, in some special cases, we can construct the preparation routine $K_b$ explicitly without using the state preparation method using integrals. One class of graphs for which this can be done is the wheel graph $W_N$ (undirected and directed example shown in Figure \ref{fig:W8}) is a graph that contains a cycle graph $C_N$ of length $N$, which has each vertex connected to a single hub vertex. The connectivity matrix for the undirected case shown in Figure \ref{fig:W8undirected} is given by the block matrix:

\begin{equation}
C(W_N) = \left(
\begin{array}{c c}
C(C_N) & 1 \\
1 & 0
\end{array}
\right),
\end{equation}

\noindent and the directed case shown in Figure \ref{fig:W8directed} is given by the modified block matrix:

\begin{equation}
C(W_N') = \left(
\begin{array}{c c}
C(C_N) & 0 \\
1 & 0
\end{array}
\right).
\end{equation}

We consider $N = 2^m$ for some $m\in \mathbb{Z}_+$. For both $ W_N $ and $W_N'$, the vertices can be partitioned into the sets $Z_1 = \{0,\ldots,N-1\}$ and $Z_2 = \{N\}$. For the set $Z_1$, we pick the reference state as $\ket{\phi_0}$, which from equation (\ref{eqn:Gmat}), can be written as $ \ket{\phi_0} = \{ \sqrt{\beta}, \sqrt{\gamma}, \sqrt{\beta}, \ldots, \sqrt{\beta}, \sqrt{\gamma}, \sqrt{\gamma} \}$ where $ \beta = \frac{(1-\alpha)}{N+1} $ and $ \gamma = \frac{\alpha}{3} + \beta $. Setting $\ket{b_1} = \ket{0}$, we can use the circuit in Figure \ref{fig:GKb1impl} to perform the operation $K_{b_1}:\ket{b_1} \rightarrow \ket{\phi_0}$.

The transformations $ T_{1,y}: \ket{\phi_y} \rightarrow \ket{\phi_0} $ can be defined as a (restricted) cyclic permutation of the reference state $\ket{\phi_0}$, i.e. $ T_{1,y} = L^y $. For the set $Z_2$, there is only one state, so $ T_{2,y} $ is not needed, and $K_{b_2}:\ket{b_2} \rightarrow \ket{\phi_{N}}$ can be constructed easily. Figure \ref{fig:Wnimpl} shows the complete circuit that simulates the Szegedy walk for the wheel graph $W_N$, the total cost being within $O(\mbox{poly}(\mbox{log}(N)))$ elementary gates.

\begin{figure}[htp]
	\centering
	\includegraphics[scale=0.30]{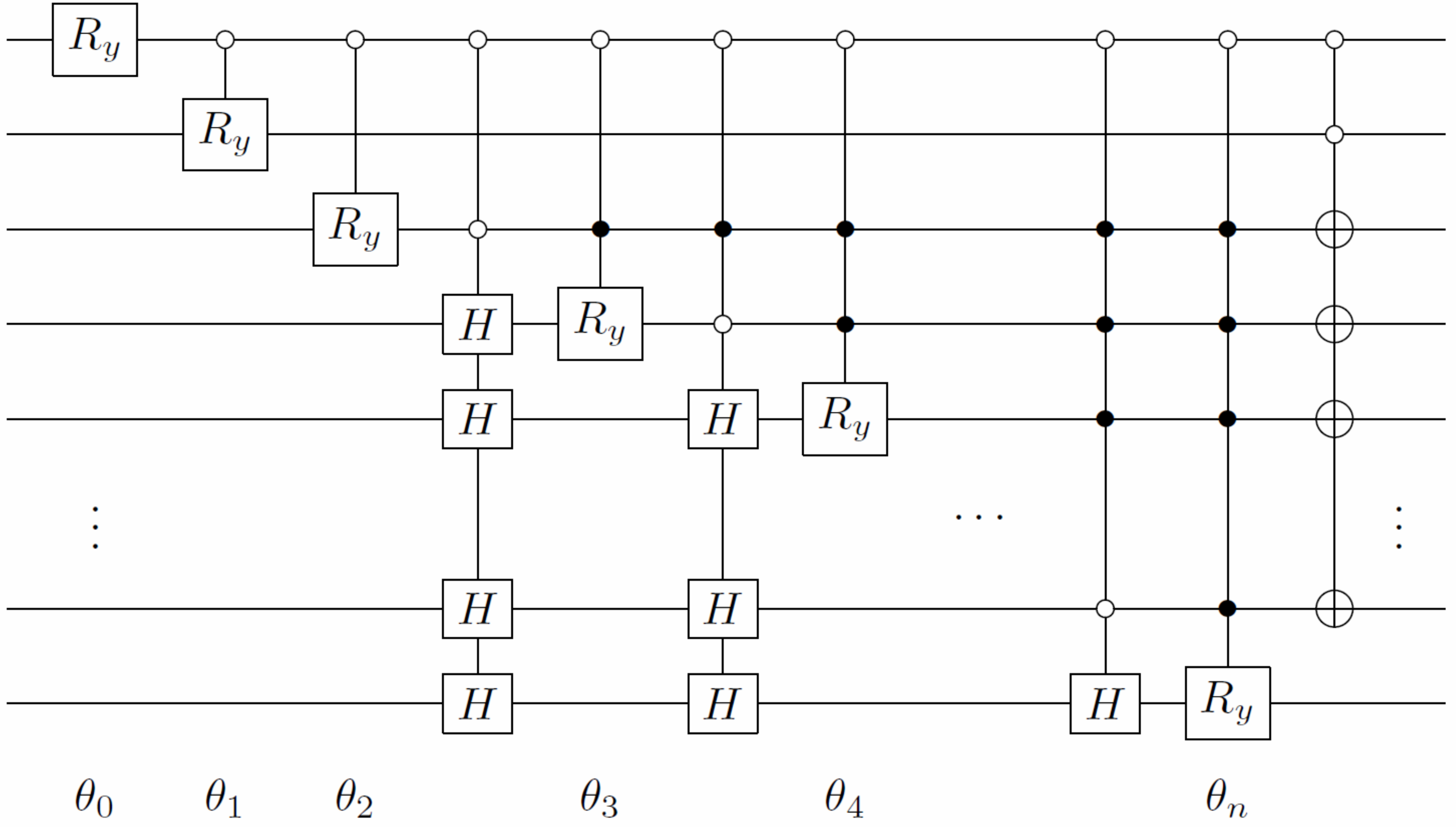}
	\caption{Quantum circuit implementing $K_{b_1}:\ket{b_1} \rightarrow \ket{\phi_0}$  for the vertex group $Z_1$ in the wheel graph $W_{2^m}$. The rotation angles are given by $\mbox{cos}(\theta_0) = \sqrt{1 - \gamma}$, $\mbox{cos}(\theta_1) = \sqrt{\frac{1}{2}}$ and $\mbox{cos}(\theta_{i>1}) = \sqrt{\frac{2^{m-i}\beta}{(2^{m-i+1}-1)\beta+\gamma}}$.}
	\label{fig:GKb1impl}
\end{figure}

\begin{figure}[htp]
	\centering
	\includegraphics[scale=0.35]{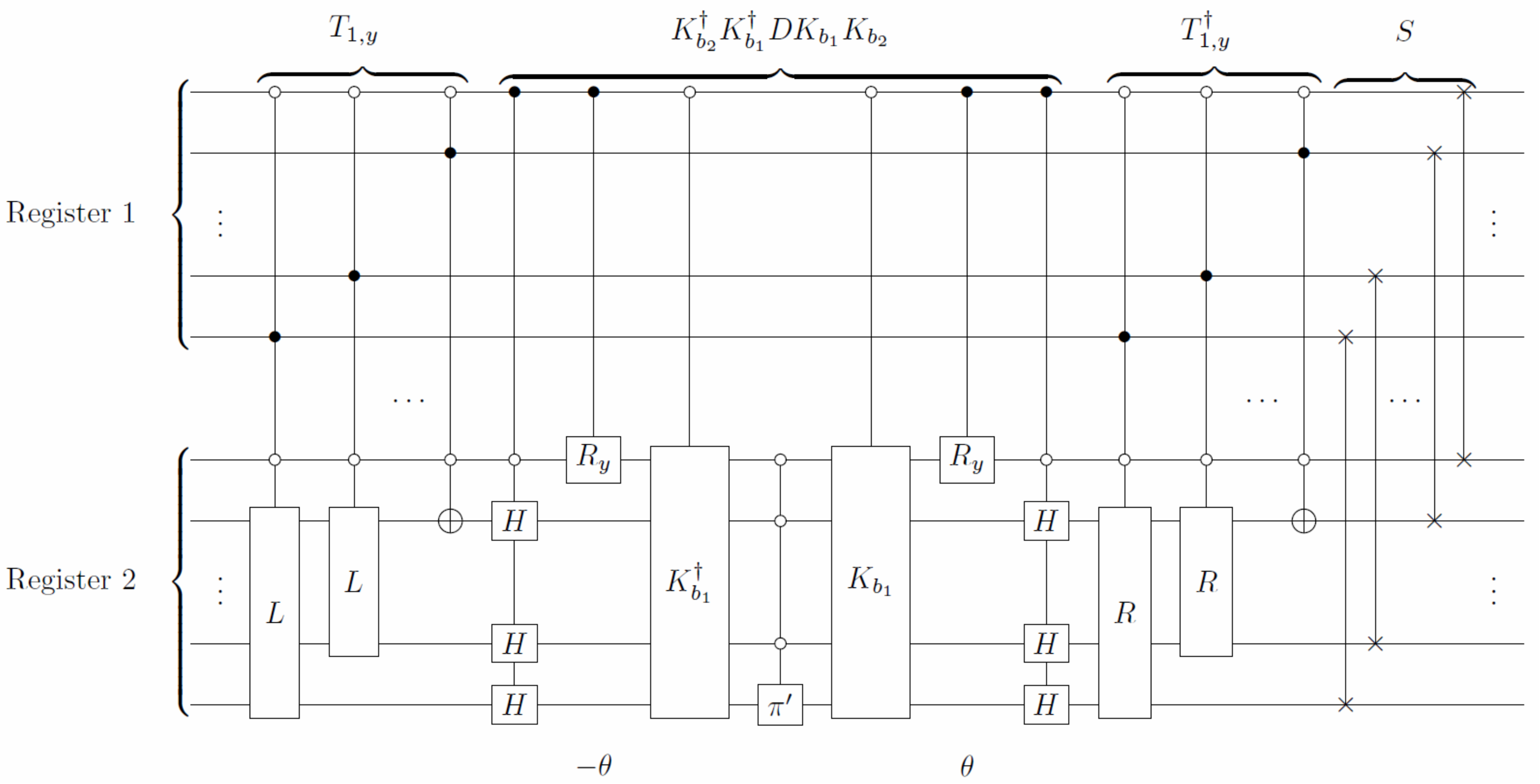}
	\caption{Quantum circuit implementing $U_{walk}$ for the $W_N$ and $W_N'$ graph. The rotation angle for $K_{b_2}$ is given by $ \mbox{cos}(\theta) = \sqrt{1-\beta} $ for the $W_N$ graph and $ \mbox{cos}(\theta) = \sqrt{\frac{N}{N+1}} $ for the $W_N'$ graph.}
	\label{fig:Wnimpl}
\end{figure}

Running the Szegedy walk using $U_{walk}^2$ as the walk operator, we obtain the instantaneous and average quantum Pagerank results, shown in Figure \ref{fig:W8res}. As expected, the hub vertex (corresponding to the set $Z_2$) has a much higher quantum Pagerank value than the outer vertices (corresponding to the set $Z_1$), which all have the same quantum Pagerank value since they are equivalent vertices. In the case of the directed graph $W_8'$, the hub vertex has a slightly higher quantum Pagerank value compared to its value in the case of the undirected graph $W_8$, because of the lack of outgoing edges from the hub in $W_8'$.

\begin{figure}[htp]
	\centering
	\subfigure[Instantaneous quantum Pagerank for $W_8$]{\includegraphics[scale=0.3]{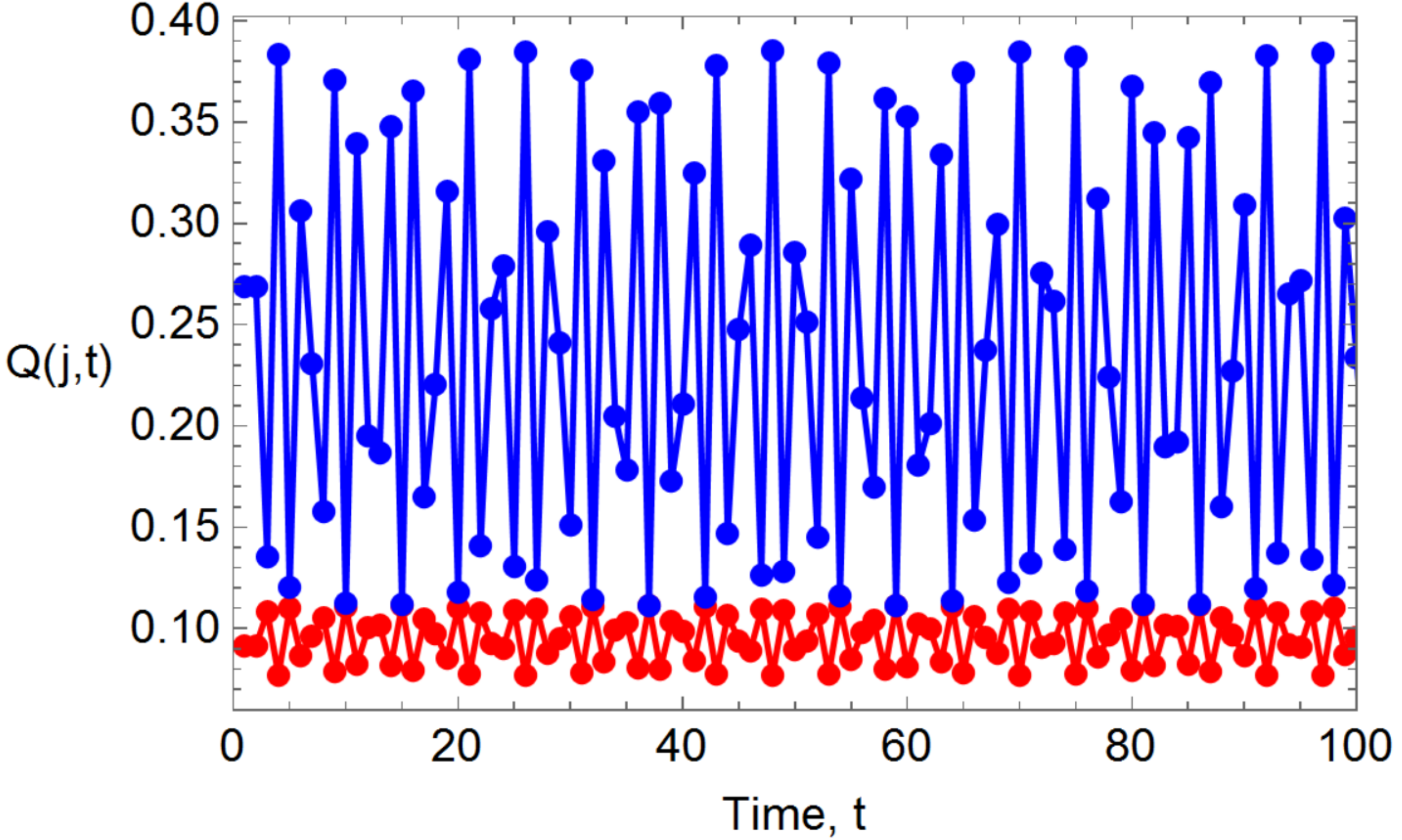}}
	\subfigure[Instantaneous quantum Pagerank for $W_8'$]{\includegraphics[scale=0.31]{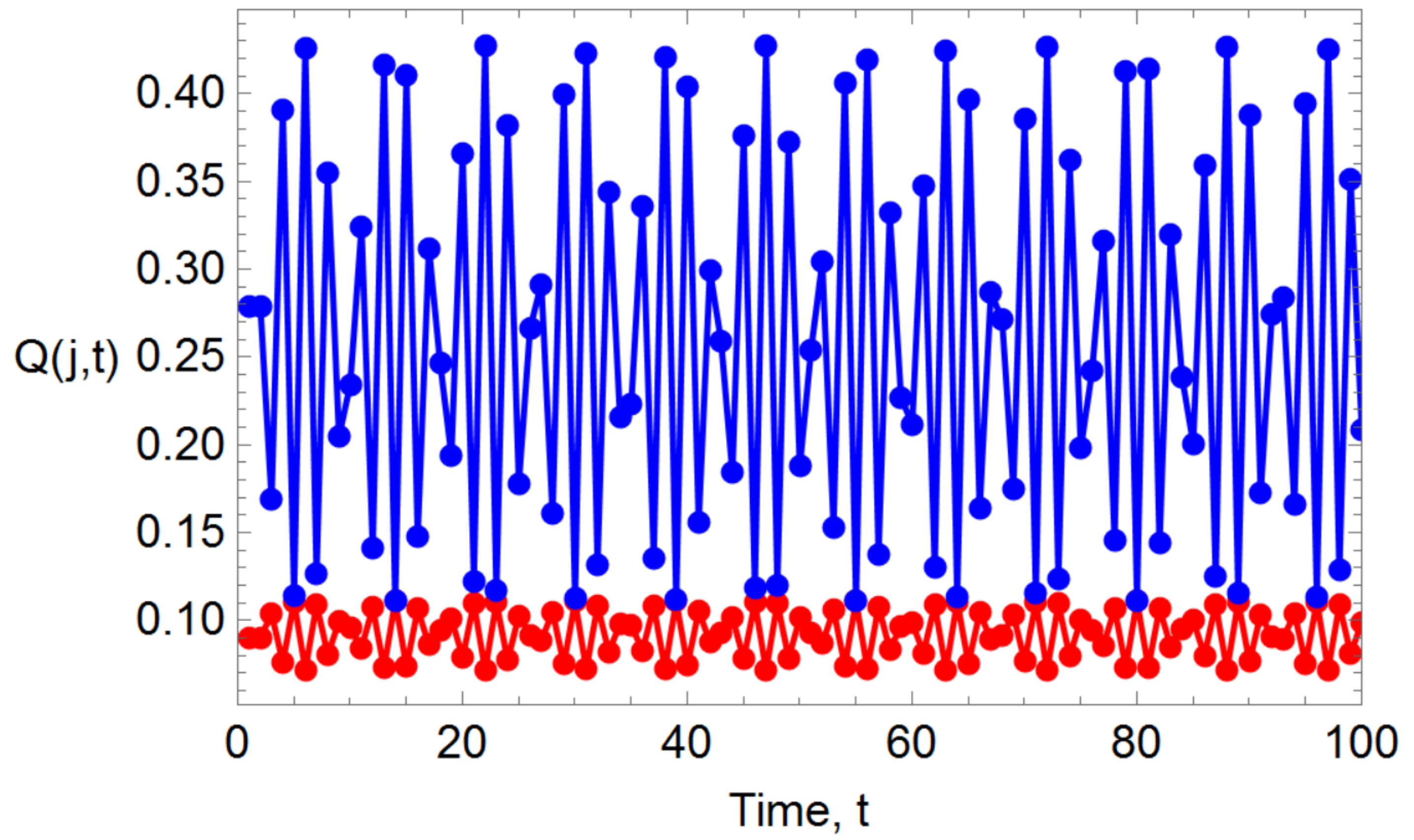}}
	\subfigure[Average quantum Pagerank]{\includegraphics[scale=0.29]{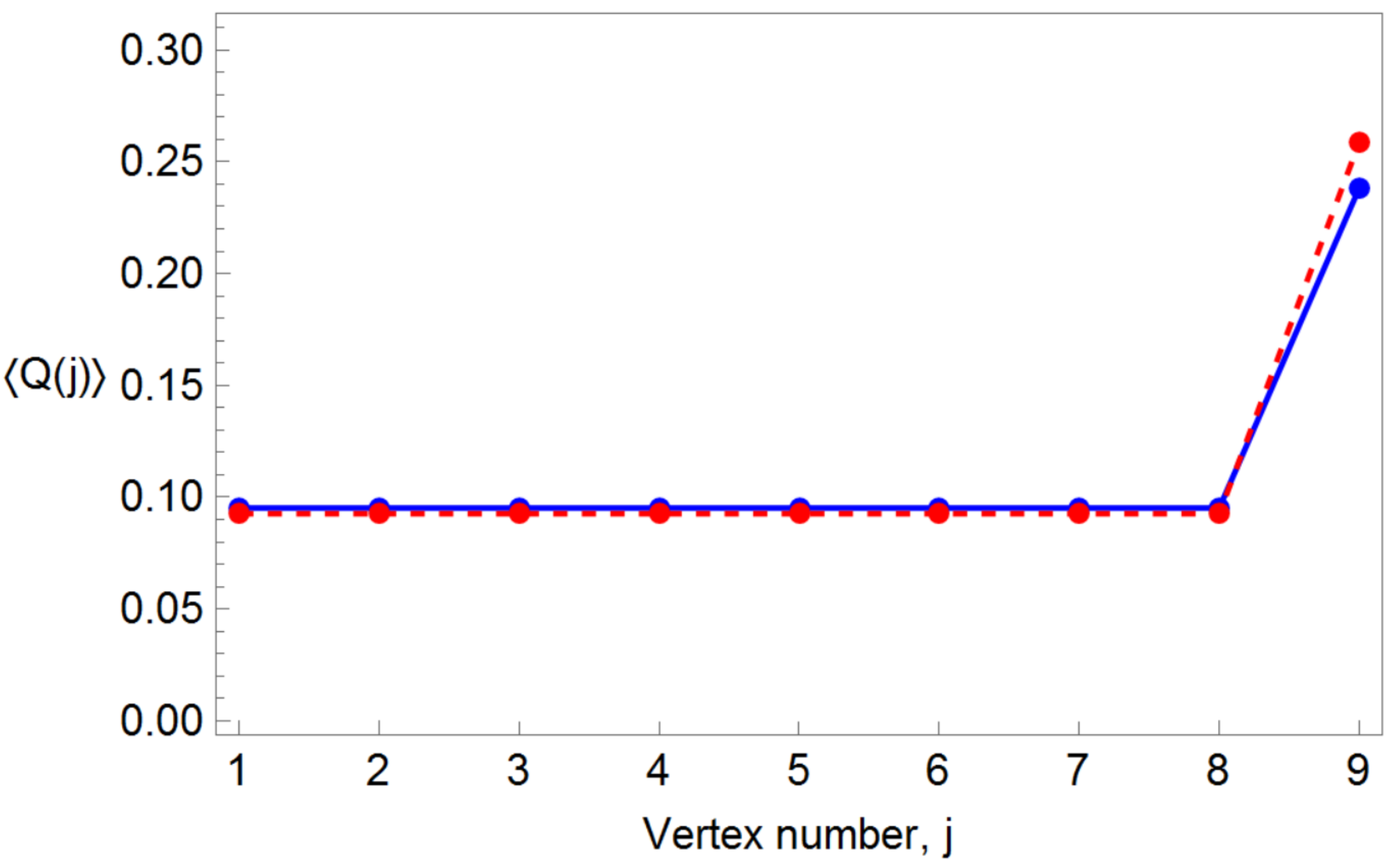}}
	\caption{Instantaneous and average quantum Pagerank results for the wheel graph $W_{8}$. In (a) and (b), the small red curve and large blue curve denotes the instantaneous quantum Pagerank for vertices 1-8 and 9 respectively (as labelled in Figure \ref{fig:W8}). In (c), the solid blue curve and the dashed red curve denotes the averaged quantum Pagerank for $W_8$ and $W_8'$ respectively.}
	\label{fig:W8res}
\end{figure}

\begin{figure}[htp]
	\centering
	\includegraphics[scale=0.25]{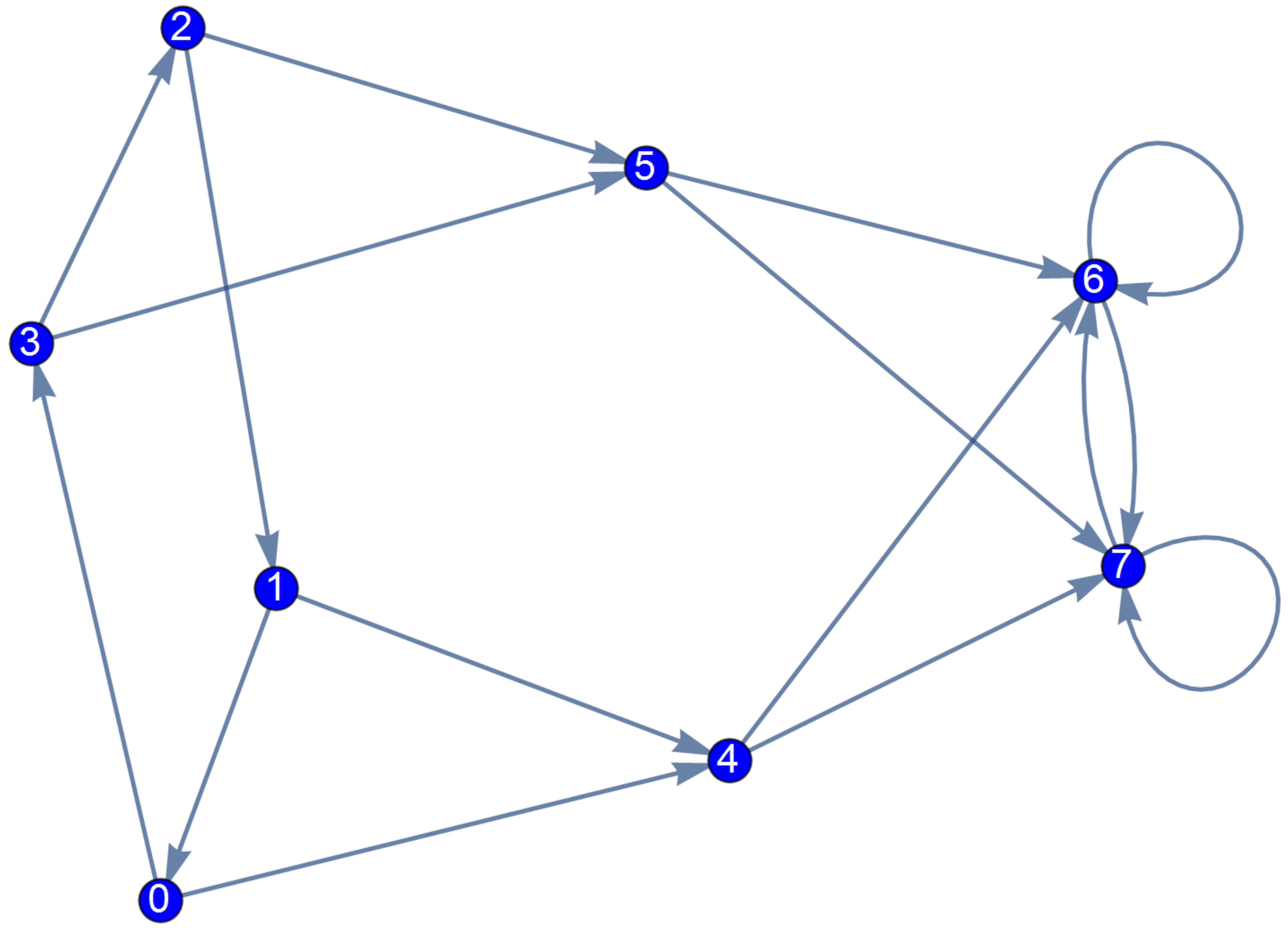}
	\caption{A directed graph on 8 vertices with three subsets of equivalent vertices.}
	\label{fig:Gdirected}
\end{figure}

A second example of computing the quantum Pagerank on directed graphs is shown in Figure \ref{fig:Gdirected}, the vertices of which can be partitioned into subsets of equivalent vertices as $Z = Z_1 \cup Z_2 \cup Z_3$ where $Z_1 = \{ 0,1,2,3 \}$, $Z_2 = \{ 4,5 \}$ and $Z_3 = \{ 6,7 \}$. The connectivity matrix is given by:

\begin{equation}
C = \left(
\begin{array}{c c c c c c c c}
0 & 0 & 0 & 1 & 1 & 0 & 0 & 0 \\
1 & 0 & 0 & 0 & 1 & 0 & 0 & 0 \\
0 & 1 & 0 & 0 & 0 & 1 & 0 & 0 \\
0 & 0 & 1 & 0 & 0 & 1 & 0 & 0 \\
0 & 0 & 0 & 0 & 0 & 0 & 1 & 1 \\
0 & 0 & 0 & 0 & 0 & 0 & 1 & 1 \\
0 & 0 & 0 & 0 & 0 & 0 & 1 & 1 \\
0 & 0 & 0 & 0 & 0 & 0 & 1 & 1
\end{array}
\right)
\end{equation}

Set the basis state for each set to be $ \ket{b_1} = \ket{b_2} = \ket{b_3} = \ket{0} $. For the set $Z_1$, we pick the reference state as $\ket{\phi_0}$, which can be written as $\ket{\phi_0} = \{ \sqrt{\beta}, \sqrt{\gamma_1}, \sqrt{\beta}, \sqrt{\beta}, \sqrt{\beta}, \sqrt{\beta}, \sqrt{\beta}, \sqrt{\beta} \}$ where $\beta = \frac{1-\alpha}{8}$ and $\gamma_1 = \alpha + \beta$. The required transformations $T_{1,y}$ for $y \in Z_1$ that does $ T_{1,y}:\ket{\phi_y} \rightarrow \ket{\phi_0} $ can be identified as $ T_{1,y} = L^{y} $. For the set $Z_2$, we pick the reference state as $\ket{\phi_4}$, which can be written as $\ket{\phi_4} = \{ \sqrt{\gamma_2}, \sqrt{\gamma_2}, \sqrt{\beta}, \sqrt{\beta}, \sqrt{\beta}, \sqrt{\beta}, \sqrt{\beta}, \sqrt{\beta} \}$ where $\gamma_2 = \frac{\alpha}{2} + \beta$. The required transformations $T_{2,y}$ for $y \in Z_2$ that does the analogous transformation is simply $ T_{2,4} = I $ and $ T_{2,5} = L^2 $. For the set $Z_3$, we pick the reference state as $\ket{\phi_6}$, which can be written as $\ket{\phi_6} = \{ \sqrt{\beta}, \sqrt{\beta}, \sqrt{\beta}, \sqrt{\beta}, \sqrt{\gamma_3}, \sqrt{\gamma_3}, \sqrt{\gamma_3}, \sqrt{\gamma_3} \}$ where $\gamma_3 = \frac{\alpha}{4} + \beta$. Since $ \ket{\phi_6} = \ket{\phi_7} $, no transformations are required. Figure \ref{fig:Gcircall} shows the quantum circuit implementing $ U_{walk} $ for this directed graph.

\begin{figure}[htp]
	\centering
	\subfigure[\mbox{ }$K_{b_1}$]{\includegraphics[scale=0.17]{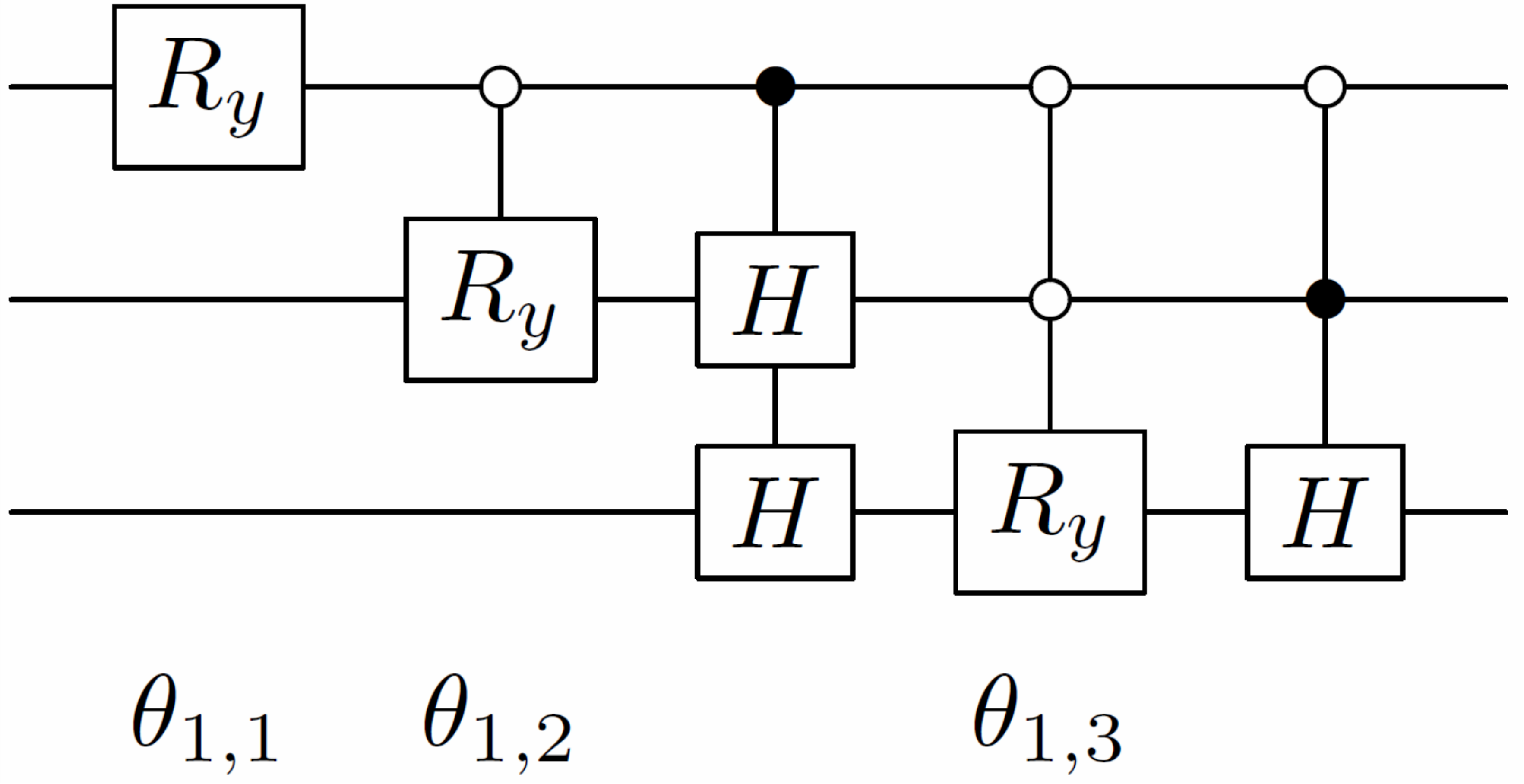}\qquad}
	\subfigure[\mbox{ }$K_{b_2}$]{\includegraphics[scale=0.17]{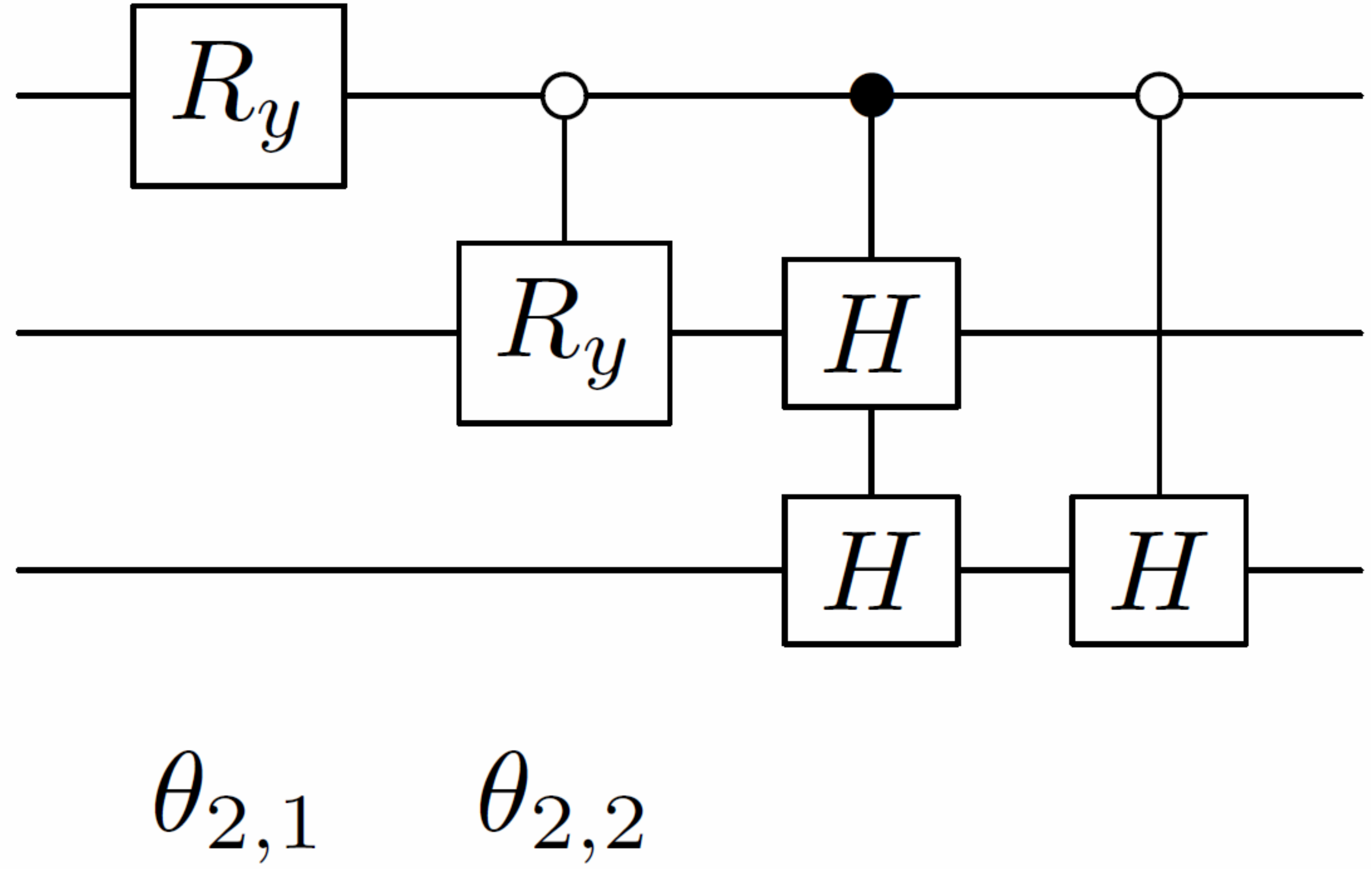}\qquad}
	\subfigure[\mbox{ }$K_{b_3}$]{\includegraphics[scale=0.17]{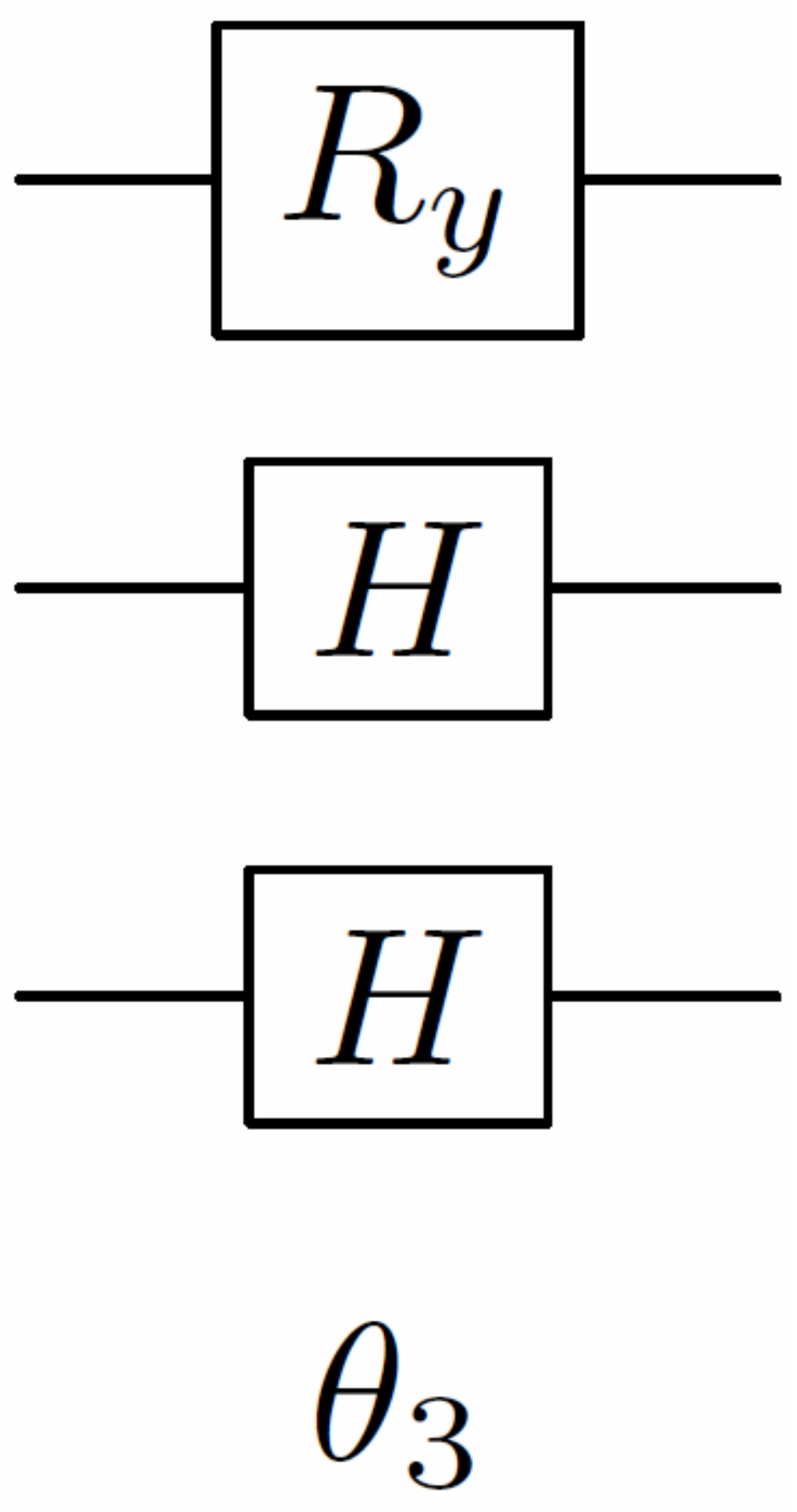}}
	\subfigure[Complete circuit for $ U_{walk} $]{\includegraphics[scale=0.30]{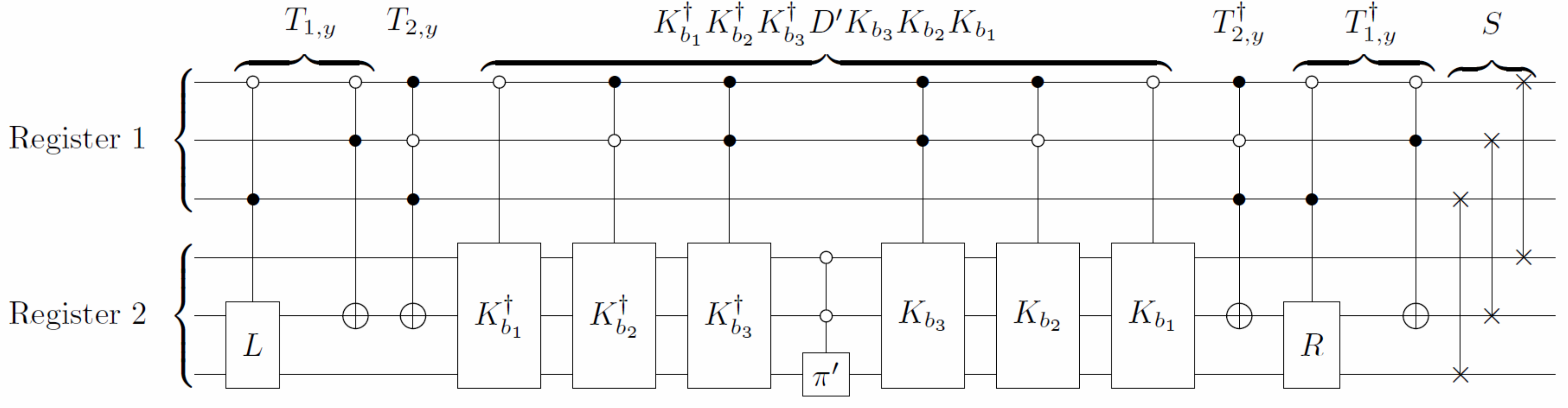}}
	\caption{Quantum circuit implementation for $U_{walk}$ shown in (d), with circuits implementing the preparation routines $K_{b_1} : \ket{0} \rightarrow \ket{\phi_0}$, $K_{b_2} : \ket{0} \rightarrow \ket{\phi_4}$ and $K_{b_3} : \ket{0} \rightarrow \ket{\phi_6}$ in (a), (b) and (c) respectively. The rotation angles used in (a) to (c) are $\mbox{cos}(\theta_{1,1})=\sqrt{\frac{3\beta+\gamma_1}{7\beta+\gamma_1}}$, $\mbox{cos}(\theta_{1,2})=\sqrt{\frac{\beta+\gamma_1}{3\beta+\gamma_1}}$, $\mbox{cos}(\theta_{1,3})=\sqrt{\frac{\beta}{\beta+\gamma_1}}$, $\mbox{cos}(\theta_{2,1})=\sqrt{\frac{\beta+\gamma_2}{3\beta+\gamma_2}}$, $\mbox{cos}(\theta_{2,2})=\sqrt{\frac{\gamma_2}{\beta+\gamma_2}}$ and $\mbox{cos}(\theta_3)=\sqrt{\frac{\beta}{\beta+\gamma_3}}$.}
	\label{fig:Gcircall}
\end{figure}

Running the Szegedy walk using $U_{walk}^2$ as the walk operator, we obtain the instantaneous and average quantum Pagerank results, shown in Figure \ref{fig:Gres}. We find the centrality of each subset to be ordered (from highest to lowest) as $Z_3$, $Z_2$ and $Z_1$.

\begin{figure}[htp]
	\centering
	\subfigure[Instantaneous quantum Pagerank]{\includegraphics[scale=0.3]{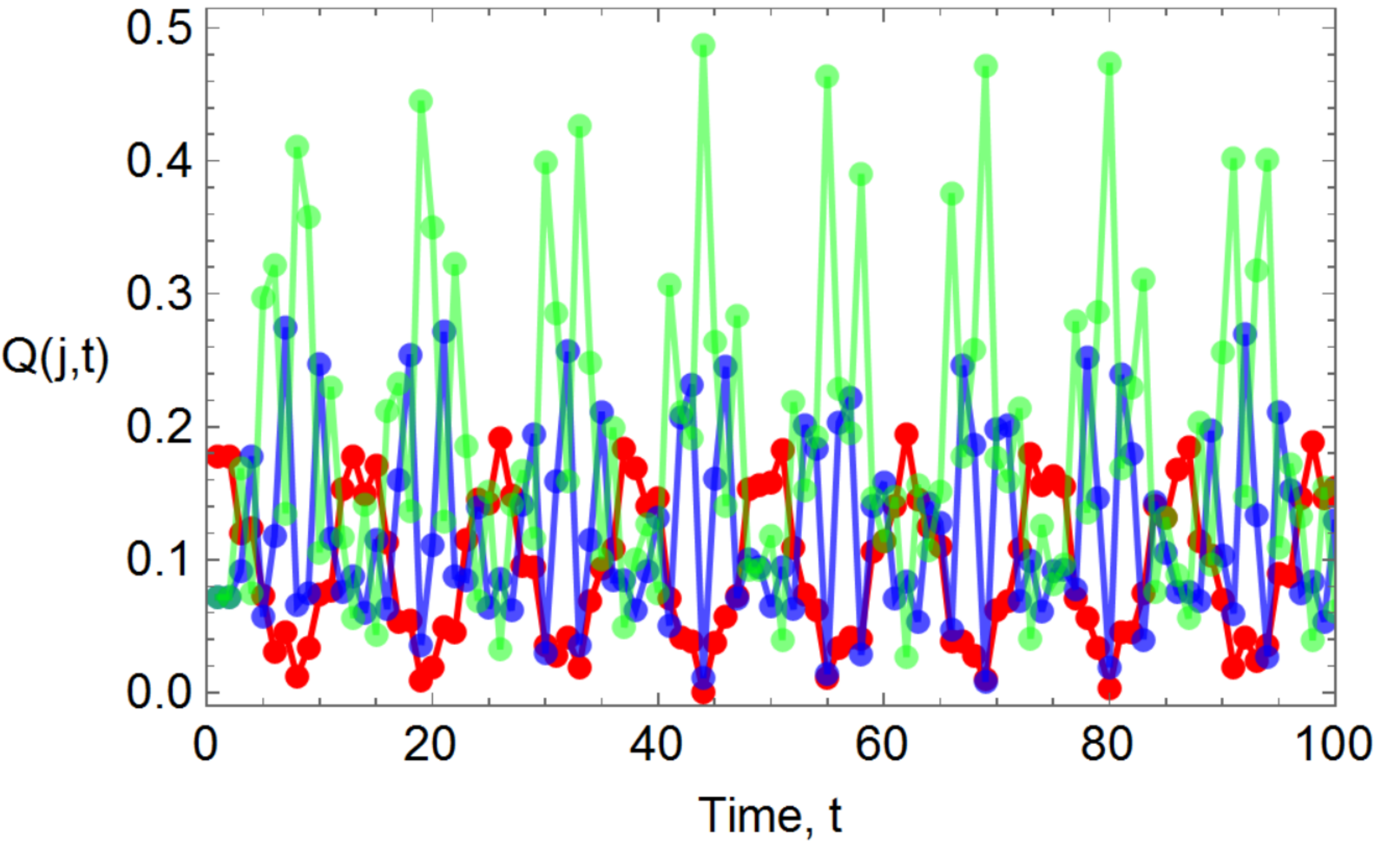}}
	\subfigure[Average quantum Pagerank]{\includegraphics[scale=0.28]{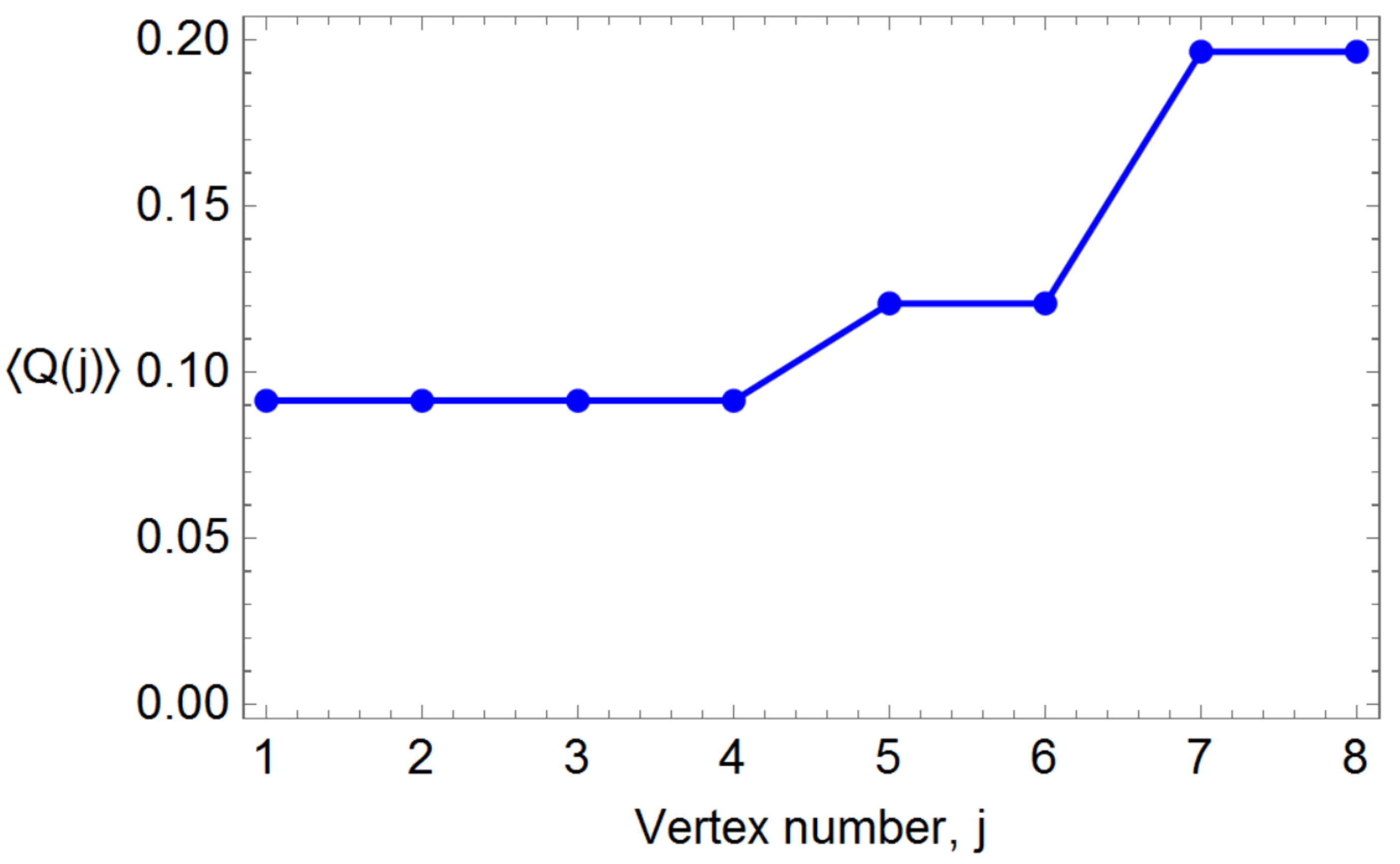}}
	\caption{Instantaneous and average quantum Pagerank results for the directed graph shown in Figure \ref{fig:Gdirected}. In (a), the red, blue and green curves denote the instantaneous quantum Pagerank for vertices 1-4, 5-6 and 7-8 respectively.}
	\label{fig:Gres}
\end{figure}

%

\section{Conclusion}
\label{sec:conclusion}

In summary, we have presented a scheme that can be used to construct efficient quantum circuits for Szegedy quantum walks if the transition matrix of the Markov chain possesses translational symmetry in the columns and if the reference state $\ket{\phi_r}$ (or states $\ket{\phi_{r_x}}$) can be prepared efficiently. This scheme, which applies to both sparse and non-sparse matrices, allows for the efficient realization of quantum algorithms based on Szegedy quantum walks. We have identified the class of cyclic permutations and complete bipartite graphs to be amenable to this scheme, as well a class of weighted interdependent networks. We have also applied our formalism to a tensor product of Markov chains, which further extends the classes of Markov chains for which the Szegedy walk can be efficiently simulated. Lastly, we have applied our results to construct efficient quantum circuits simulating Szegedy walks used in the quantum Pagerank algorithm, providing a means to experimentally demonstrate the quantum Pagerank algorithm.

A potential area for further research would be identifying other useful classes of transition matrices to which the formalism of section \ref{subsec:circuit} can be applied. To begin with, we can generalize the idea in section \ref{sec:cyclic}. Suppose we have some preparation routine $K_b \ket{b} = \ket{\phi_0}$. Consider the class of transition matrices where $\ket{\phi_{i+1}} = U \ket{\phi_i}$, where $U$ is some unitary operation. If the operations $\{U,U^2,U^4,U^8,\ldots\}$ can be efficiently implemented, then the Szegedy walk corresponding to the transition matrix can always be efficiently realized. In the case of section \ref{sec:cyclic}, we had $U=R^x$ or $U=L^x$ for some $x\in\mathbb{Z}_+$. We can obtain different classes of transition matrices by changing $U$---choices such as rotation matrices $R_y(\theta)$ and phase shift matrices $R_z(\phi)$ satisfy the required condition, since $R_y(\theta)^x = R_y(x\theta)$ and $R_z(\phi)^x = R_z(x\phi)$ for any $x\in\mathbb{Z}_+$. 


\clearpage

\bibliographystyle{model1-num-names}

\bibliography{References}

\clearpage

\end{document}